\def\argmax{\mathop{\rm \arg\!\max}}
\def\argmin{\mathop{\rm \arg\!\min}}
\newtheorem{theorem}{Theorem}
\newtheorem{corollary}{Corollary}
\newtheorem{lemma}{Lemma}
\newtheorem{proposition}{Proposition}
\newtheorem{remark}{Remark}
\def\ba{{\bf a}}
\def\bh{{\bf h}}
\def\bn{{\bf n}}
\def\bq{{\bf q}}
\def\br{{\bf r}}
\def\bs{{\bf s}}
\def\bv{{\bf v}}
\def\bx{{\bf x}}
\def\by{{\bf y}}
\def\bA{{\bf A}}
\def\bB{{\bf B}}
\def\bG{{\bf G}}
\def\bH{{\bf H}}
\def\bI{{\bf I}}
\def\bP{{\bf P}}
\def\bR{{\bf R}}
\def\bW{{\bf W}}
\def\cA{\mbox{$\mathcal{A}$}}
\def\cB{\mbox{$\mathcal{B}$}}
\def\cC{\mbox{$\mathcal{C}$}}
\def\cK{\mbox{$\mathcal{K}$}}
\def\cL{\mbox{$\mathcal{L}$}}
\def\cN{\mbox{$\mathcal{N}$}}
\def\cP{\mbox{$\mathcal{P}$}}
\def\cR{\mbox{$\mathcal{R}$}}
\def\cS{\mbox{$\mathcal{S}$}}
\def\cU{\mbox{$\mathcal{U}$}}
\def\cV{\mbox{$\mathcal{V}$}}
\def\bbE{\mbox{$\mathbb{E}$}}
\newcommand{\thickhline}{%
    \noalign {\ifnum 0=`}\fi \hrule height 1pt
    \futurelet \reserved@a \@xhline
}
\newcolumntype{"}{@{\hskip\tabcolsep\vrule width 1pt\hskip\tabcolsep}}
\title{\huge User Scheduling for Millimeter Wave Hybrid Beamforming Systems with Low-Resolution ADCs}
\author{
Jinseok Choi, Gilwon Lee, and Brian L. Evans \thanks{
J. Choi and B. L. Evans are with the Wireless Networking and Communication Group (WNCG), Dept. of Electrical and Computer Engineering, The University of Texas at Austin, Austin, TX 78701. (e-mail: \{jinseokchoi89@, bevans@ece.\}utexas.edu). G. Lee is with Intel Corporation, Santa Clara, CA 95054. (e-mail: gilwon.lee@intel.com). 
J. Choi and B. L. Evans were supported by gift funding from Huawei Technologies.
A preliminary version of this work was presented in IEEE ICC 2018 [4].
}
}
\begin{document}
\maketitle

\begin{abstract}
We investigate uplink user scheduling for millimeter wave (mmWave) hybrid analog/digital beamforming systems with low-resolution analog-to-digital converters (ADCs). Deriving new scheduling criteria for the mmWave systems, we show that the channel structure in the beamspace, in addition to the channel magnitude and orthogonality, plays a key role in maximizing the achievable rates of scheduled users due to quantization error. The criteria show that to maximize the achievable rate for a given channel gain, the channels of the scheduled users need to have $(1)$ as many propagation paths as possible with unique angle-of-arrivals (AoAs) and $(2)$ even power distribution in the beamspace. Leveraging the derived criteria, we propose an efficient scheduling algorithm for mmWave zero-forcing receivers with low-resolution ADCs. We further propose a chordal distance-based scheduling algorithm that exploits only the AoA knowledge and analyze the performance by deriving ergodic rates in closed form. Based on the derived rates, we show that the beamspace channel leakage resulting from phase offsets between AoAs and quantized angles of analog combiners can lead to sum rate gain by reducing quantization error compared to the channel without leakage. Simulation results validate the sum rate performance of the proposed algorithms and derived ergodic rate expressions.
\end{abstract}

\begin{IEEEkeywords}
Millimeter wave, low-resolution ADCs, hybrid MIMO system, user scheduling, channel structure.
\end{IEEEkeywords}

\section{Introduction}
\label{sec:intro}


Millimeter wave wireless communication has emerged as a promising technology for next generation cellular systems \cite{pi2011introduction}.
The advantages of remarkably wide bandwidth in mmWave frequencies ranging from
$30-300$ GHz can be exploited to meet ever increasing capacity requirements of wireless communication network.
To compensate for the large path loss of mmWave channels, large antenna arrays are likely to be deployed into tranceivers with very small antenna spacing owing to the small wavelength.
Due to a large signal bandwidth,
high-resolution ADCs coupled with large antenna arrays demand significant power consumption in the receiver, and the power consumption of ADCs scales exponentially in the number of quantization bits.
Therefore, employing low-resolution ADCs has been proposed as a natural solution, and extensive research has been conducted in such systems for mmWave communications \cite{orhan2015low, heath2016overview}.
In this regard, as an extension of our work \cite{choi2017user}, we also investigate low-resolution ADC systems by focusing on user scheduling.

\subsection{Prior Work}

As an effort to realize low-resolution ADC systems, essential wireless communication techniques such as channel estimation and detection have been developed in low-resolution ADC systems \cite{mo2014channel, choi2016near, li2017channel, wen2016bayes, wang2014multiuser, wang2015multiuser}.
For the 1-bit ADC system which is the extreme case of low-resolution ADCs, compressive sensing \cite{mo2014channel}, maximum-likelihood \cite{choi2016near}, and Bussgang decomposition-based techniques \cite{li2017channel} were employed for channel estimation.
Compressive sensing-based channel estimators were also developed for the systems with low-resolution ADCs \cite{wen2016bayes}, and achieved comparable estimation accuracy to that of infinite-bit ADC systems at low and medium signal-to-noise ratio (SNR).
Achieving higher detection accuracy than a minimum mean squared error (MMSE) estimator, message passing de-quantization-based detectors were proposed in 1-bit ADC \cite{wang2014multiuser} and low-resolution ADC systems \cite{wang2015multiuser}.

In recent years, low-resolution ADC systems with hybrid analog/digital beamforming have been investigated to take advantage of both the reduced number of ADC bits and radio frequency (RF) chains \cite{mo2017hybrid, choi2017resolution, choi2017adc, sung2018narrowband}.
It was shown in \cite{mo2017hybrid} that the hybrid beamforming systems with low-resolution ADCs achieve comparable rate to that of infinite-bit ADC systems, providing better energy-rate trade-off compared to conventional hybrid multiple-input multiple-output (MIMO) systems and low-resolution ADC systems.
To further increase spectral and energy efficiency of mmWave receivers, deploying adaptive-resolution ADCs in hybrid MIMO systems was proposed with ADC bit-allocation algorithms \cite{choi2017resolution,choi2017adc}.
Channel estimation techniques were also investigated for hybrid MIMO systems with low-resolution ADCs \cite{sung2018narrowband}.
Understanding the superior spectral and energy efficiency of the architecture, we focus on the hybrid MIMO receiver with low-resolution ADCs to solve a user scheduling problem in mmWave communications.

Although user scheduling in multiuser MIMO systems has been extensively studied for more than a decade, it has not been investigated for low-resolution ADC systems.
One representative method of user scheduling is the semi-orthogonal user selection (SUS) method \cite{yoo2006optimality}. 
This method selects users in a greedy manner such that the channel vectors of the selected users are nearly orthogonal and have large magnitudes based on the full channel state information (CSI) knowledge of all users at the basestation (BS).  
Another representative approach is the random beamforming (RBF) method \cite{sharif2005capacity} that selects the user who has the maximum signal-to-interference-noise ratio (SINR) for each beam when a set of orthogonal beams are determined a priori at the BS before scheduling. 
Similarly, to capture the orthogonality between channels of scheduled users, user scheduling algorithms that adopt chordal distance as a selection measure were proposed in \cite{zhou2011chordal,ko2012multiuser}.

{\color{black} Unlike the user scheduling methods that have been studied under the Rayleigh fading channel model by assuming rich scattering \cite{yoo2006optimality, sharif2005capacity, liu2010expected}, different approaches have investigated user scheduling under the channels with poor scattering such as mmWave channels \cite{rajashekar2017user, lee2016randomly,lee2016performance}.
In \cite{rajashekar2017user}, user scheduling algorithms were proposed for mmWave communications by leveraging the knowledge of channel gain and angle of departure. 
In addition, the achievable sum rate was quantified for the BS which employs an iterative matrix decomposition
based hybrid beamforming scheme proposed in \cite{rajashekar2017iterative}.}
The RBF method was analyzed in both the uniform random single path \cite{lee2016randomly} and multi-path channel models \cite{lee2016performance}.
By exploiting the sparse nature of mmWave channels, beam aggregation-based scheduling and fairness-aware scheduling algorithms were developed in \cite{lee2016performance}. 
Although the user scheduling algorithms were proposed for mmWave communications, 
they still focused on user scheduling without quantization error. Consequently, user scheduling in mmWave systems with low-resolution ADCs remains questionable.
\subsection{Contributions}

{\color{black} 
In this paper, we investigate uplink user scheduling for mmWave hybrid MIMO zero-forcing receivers with low-resolution ADCs. 
Noting that non-negligible quantization error can be a primary bottleneck for attaining scheduling gain in the low-resolution ADC system, we provide following contributions:
\begin{itemize}[leftmargin=*]
    \item We derive user scheduling criteria to maximize the scheduling gain by finding the best tradeoff between channel gains and corresponding quantization noise.
    Adopting the virtual channel model \cite{sayeed2002deconstructing}, the criteria can be interpreted as follows: for a given channel gain, $(i)$ unique AoAs of each scheduled user and $(ii)$ equal power spread across the beamspace complex gains within each user maximize sum rate.
    Accordingly, the derived scheduling criteria reveal that the channel structure in the beamspace, in addition to the channel magnitude and orthogonality, plays a key role in maximizing sum rate under coarse quantization. 
    \item Leveraging the derived criteria, we propose an efficient scheduling algorithm for hybrid low-resolution ADC systems. 
    The proposed algorithm combines semi-orthogonal user filtering \cite{yoo2006optimality} and non-overlap filtering of dominant beams \cite{lee2016performance} to enforce orthogonality among scheduled users and to reduce quantization error.
    Using an approximated SINR as a scheduling measure, the algorithm captures the trade-off between channel gain and corresponding quantization error, and reduces computational complexity by avoiding matrix inversion.
    \item Considering the difficulty of acquiring instantaneous full CSI, we further propose a chordal distance-based scheduling algorithm which only requires AoAs of mmWave channels, known as slowly-varying channel characteristics \cite{park2017spatial}.  
    Unlike the previously developed chordal distance-based algorithms \cite{zhou2011chordal,ko2012multiuser} that use full CSI and adopt a simple greedy structure which requires prohibitively high complexity, our proposed algorithm exploits only the AoA information of mmWave channels and reduces the complexity by filtering a user candidate set.
    \item To analyze the performance of the chordal distance-based algorithm, we derive closed-form sum rates for two channel scenarios: (1) AoAs exactly align with quantized angles of analog combiners and (2) arbitrary AoAs produce phase offsets from the quantized angles, which results in channel leakage.
    For the first scenario, we derive an ergodic rate as the sum of the ergodic rate with no quantization and the rate loss due to quantization.
    Accordingly, the derived rate provides the expected ergodic rate loss due to quantization in closed form. 
    For the second scenario, an approximated lower bound of the ergodic rate is derived in closed form.
    We observe that the two channel scenarios result in different sum rates as a consequence of coarse quantization, and the channel leakage provides sum rate gain by reducing quantization error, which challenges the conventional negative understanding towards channel leakage.
\end{itemize}
Simulation results demonstrate the superior ergodic sum rate performance of the proposed algorithms and validate the analysis and intuition obtained in this paper.
}

{\it Notation}: $\bf{A}$ is a matrix and $\bf{a}$ is a column vector. 
$\mathbf{A}^{H}$ and $\mathbf{A}^\intercal$  denote conjugate transpose and transpose. 
$[{\bf A}]_{i,:}$ and $ \mathbf{a}_i$ indicate the $i$th row and column vector of $\bf A$. 
We denote $a_{i,j}$ as the $\{i,j\}$th element of $\bf A$ and $a_{i}$ as the $i$th element of $\bf a$. 
$\mathcal{CN}(\mu, \sigma^2)$ is a complex Gaussian distribution with mean $\mu$ and variance $\sigma^2$. 
$\mathbb{E}[\cdot]$ and ${\rm Var}[\cdot]$ represent expectation and variance operator, respectively.
The cross-correlation matrix is denoted as ${\bf R}_{\bf xy} = \mathbb{E}[{\bf x}{\bf y}^H]$.
The diagonal matrix $\rm diag(\bf A)$ has $\{a_{i,i}\}$ at its $i$th diagonal entry, and $\rm diag (\bf a)$ or $\rm diag(\bf a^\intercal)$ has $\{a_i\}$ at its $i$th diagonal entry. 
${\bf I}_N$ denotes an $N \times N$ identity matrix and $\|\bf A\|$ represents $L_2$ norm.
$|\cdot|$ indicates an absolute value for a complex value or denotes cardinality of a set.  
${\rm tr}(\cdot)$ is a trace operator.

\section{System Model}
\label{sec:sys_model}

\begin{figure}[!t]\centering
\includegraphics[scale = .3]{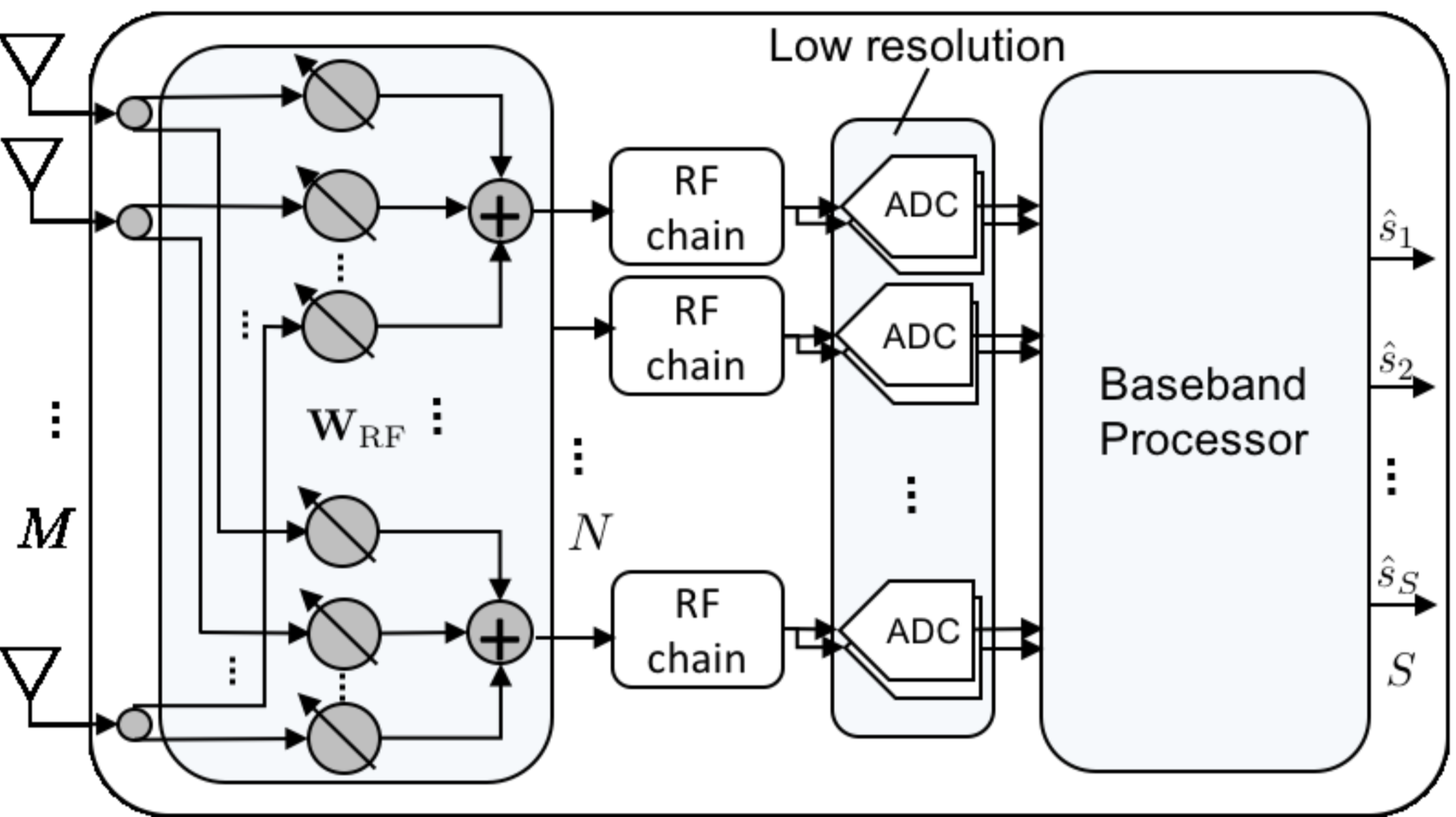}
\caption{A receiver architecture with large antenna arrays and analog combiners ${\bf W}_{\rm RF}$, followed by low-resolution ADCs. } 
\label{fig:system}
\vspace{-1em}
\end{figure}

\subsection{Signal and Channel Models}

We consider a single-cell multiuser MIMO network for uplink communications.
A BS employs a uniform linear array (ULA) of $M$ receive antennas.
Analog combiners are applied at the BS, followed by $N \leq M$ chains as shown in Fig. \ref{fig:system}.
We assume that $K$ single-antenna users are distributed in the cell and the BS schedules $S \leq N$ users to serve among the $K$ users in the cell.
The ADCs are considered to be low-resolution ADCs to reduce the receiver power consumption.

{\color{black}
Focusing on mmWave communications, the channel ${\bf h}_k$ for user $k$ is assumed to be a sum of the contributions of limited scatterers that contribute $L_k$ propagation paths to the channel ${\bf h}_k$  \cite{brady2013beamspace}.
Therefore, the discrete-time narrowband channel of user $k$ can be modeled as \cite{sayeed2002deconstructing}
\begin{align}
	\label{eq:channel_geo}
	\bh_{\gamma,k}=\sqrt{\frac{1}{\gamma_k}}{\bf h}_k = \sqrt{\frac{M}{\gamma_k L_k}}\sum_{\ell = 1}^{L_k}g_{k,\ell} {\bf a}(\phi_{k,\ell})
\end{align}
where $\gamma_k$ denotes the pathloss of user $k$, $g_{k,{\ell}}$ is the complex gain of the $\ell${th} propagation path of user $k$, and ${\bf a}(\phi_{k,{\ell}})$ is the array steering vector of the BS receive antennas corresponding to the azimuth AoA of the $\ell$th path of the $k$th user $\phi_{k,{\ell}} \in [-\pi/2,\pi/2]$. 
We consider that $g_{k,{\ell}}$ is an independent and identically distributed (IID) complex Gaussian random variable as $g_{k,{\ell}} \overset{i.i.d}{\sim} \mathcal{CN}(0, 1)$.
The array steering vector ${\bf a}(\theta)$ for the ULA antennas of the BS is given as
\begin{align}
	\label{eq:ARV}
	{\bf a}(\theta) = \frac{1}{\sqrt{M}}\Big[1,e^{-j \pi\vartheta},e^{-j 2\pi \vartheta},\dots,e^{-j (M-1)\pi\vartheta}\Big]^\intercal 
\end{align}
where $\vartheta = \frac{2d}{\lambda}\sin(\theta)$ is the spatial angle that is related to the physical AoA $\theta$, $d$ denotes the distance between antenna elements, and $\lambda$ represents the signal wave length.
Throughout this paper, we use $\theta$ and $\phi$ to denote the physical angles of analog combiners and physical AoAs of a user channel, respectively.
We also use $\vartheta$ and $\varphi$ to indicate the spatial angles for $\theta$ and $\phi$, respectively.
We assume that $\vartheta$ is a constant value in the range of $[-1,1]$ and $\varphi$ is a uniform random variable $\varphi\sim {\rm Unif}[-1,1]$.

For simplicity, we consider a homogeneous long-term received SNR network\footnote{\color{black} We remark that the proposed scheduling criteria and the proposed algorithms in this paper can also be applicable to a heterogeneous long-term received SNR network.} where a conventional uplink power control compensates for the pathloss and shadowing effect to achieve the same long-term received SNR target for all users in the cell \cite{simonsson2008uplink, tejaswi2013survey}. 
Let $\bx =\bP\bs$ be the transmitted user signals where $\bP ={\rm diag}\{\sqrt{\rho\, \gamma_1},\dots,\sqrt{\rho\,\gamma_S}\}$ is the transmit power matrix and $\bs$ is the $S \times 1$ transmitted symbol vector from $S$ users.
Let $\bH_\gamma = \bH\bB$ represent the $M \times S$ channel matrix where {\color{black} $\bH_\gamma = [\bh_{\gamma,1},\dots,\bh_{\gamma,S}]$ is the channel matrix, $\bH = [\bh_{1},\dots,\bh_{S}]$ is the channel matrix after the uplink power control, and $\bB = {\rm diag}\{\sqrt{1/\gamma_1},\dots,\sqrt{1/\gamma_S}\}$ is the pathloss matrix.}
Then, the received baseband analog signal ${\bf r} \in \mathbb{C}^{M}$ is given as
\begin{align}
	\label{eq:received_analog_signal}
	\br = \bH_\gamma \bx + \bn =\bH\bB \bP\bs + \bn  =  \sqrt{\rho}\bH\bs + \bn
\end{align} 
where we assume  ${\bf s} \sim \mathcal{CN}({\bf 0}, {\bf I}_S)$, and ${\bf n}$ indicates the additive white Gaussian noise (AWGN) vector ${\bf n}  \sim \mathcal{CN}(\mathbf{0}, \mathbf{I}_M)$. Thus, we can regard $\rho$ as the SNR. 

The received analog signals in \eqref{eq:received_analog_signal} are combined via an $M \times N$ analog combiner ${\bf W}_{\rm RF}$.
The combiner ${\bf W}_{\rm RF}$ is implemented using analog phase shifters, and its elements are constrained to have the equal norm of $1/\sqrt{M}$.
After analog combining, \eqref{eq:received_analog_signal} becomes
\begin{align} 
	\label{eq:y}
	{\bf {y}}  = {\bf W}_{\rm RF}^H {\bf r} = \sqrt{\rho}{\bf  W}_{\rm RF}^H{\bf Hs} +  {\bf W}_{\rm RF}^H {\bf n}.
\end{align}
Assuming uniformly-spaced spatial angles, 
the matrix of array steering vectors
$\mathbf{A}=\big[{\bf a}(\theta_1),\dots,$ ${\bf a}(\theta_{M})\big]$
becomes a unitary discrete Fourier transform (DFT) matrix. 
Noting that the antenna space and beamspace are related through a spatial Fourier transform, we adopt a sub-matrix of the DFT matrix as the analog combiner ${\bf W}_{\rm RF} = \tilde{\bf A}$ \cite{alkhateeb2014channel, choi2017resolution} to project the received signals onto the beamspace, where $\tilde{\bf A}$ consists of $N$ columns of ${\bf A}$.
Through the projection, the BS can exploit the sparsity of the mmWave channels to capture channel gains with the reduced number of RF chains \cite{el2014spatially}.
Using ${\bf W}_{\rm RF} = \tilde{\bf A}$, we rewrite \eqref{eq:y} as
\begin{align}
	\label{eq:combining_output}
	{\bf y} = \sqrt{\rho} \tilde{\bf A}^H{\bf H}{\bf s} + \tilde{\bf A}^H{\bf n} = \sqrt{\rho}\bH_{\rm b}\bs + \bv.
\end{align}
We denote ${\bf H}_{\rm b} = \tilde{\bf A}^H{\bf H}$, which is the projection of the channel matrix onto the beamspace.
Since $\bA$ is a unitary matrix, the projected noise vector $\bv = \tilde{\bf A}^H{\bf n}$ is  distributed as $\cC\cN({\bf 0},\bI_N)$.
}

\subsection{Quantization Model}
\label{subsec:quantization}


{\color{black} In this subsection, we introduce an additive quantization noise model \cite{fletcher2007robust} which approximates quantization process in a linear form for analytical tractability.}
Such linear approximation of quantization provides reasonable accuracy in low and medium SNR ranges \cite{orhan2015low}.
After processed through the RF chains, each complex sample $y_i$ in \eqref{eq:combining_output} is quantized at the $i$th pair of ADCs, and each ADC quantizes either a real or imaginary component of $y_i$.
The quantized signal $\bf y_{\rm q}$ is \cite{fletcher2007robust}
\begin{align} 
	\label{eq:yq}
	{\bf y}_{\rm q} & = \mathcal{Q}\bigl({\rm Re}\{\by\}\bigr)  + j\mathcal{Q}\bigl({\rm Im}\{\by\}\bigr) = \alpha \sqrt{\rho} {\bf H}_{\rm b}{\bf s}+ \alpha {\bf v} +{\bf q}
\end{align} 
where $\mathcal{Q}(\cdot)$ is the element-wise quantizer function.
The quantization gain $\alpha$ is defined as $\alpha = 1- \beta$, $\beta = {\mathbb{E}[|y - {y}_{{\rm q}}|^2]}/{\mathbb{E}[|{y}|^2]}$ is a normalized mean squared quantization error, and ${\bf q}$ is the additive quantization noise vector.

For a scalar MMSE quantizer of a Gaussian random variable, $\beta$ can be approximated as $\beta \approx \frac{\pi\sqrt{3}}{2} 2^{-2b}$ for $b > 5$ \cite{gersho2012vector} where $b$ denotes the number of quantization bits for each real and imaginary part of $y$.
The values of $\beta$ for $b \leq 5$ are shown in Table 1 in \cite{choi2017resolution}.
{\color{black} Although the quantization error is neither Gaussian nor is its covariance matrix diagonal in an exact nonlinear quantization model, we provide approximations based on \cite{orhan2015low, fletcher2007robust, mezghani2012capacity} as follows:
considering a lower bound of achievable rate, we assume ${\bf q} \sim \mathcal{CN}({\bf 0},{\bf R}_{\bf qq}(\bH_{\rm b}))$ \cite{mezghani2012capacity}.
Since ${\bf q}$ is uncorrelated with $\bf y$ \cite{fletcher2007robust}, the covariance matrix of $\bq$ with $ {\bf H}_{\rm b}$ is given as  \cite{fletcher2007robust,mezghani2012capacity}}
\begin{align}
	\label{eq:cov}
	\mathbf{R}_{\bf qq}(\bH_{\rm b})= \alpha(1-\alpha)\,{\rm diag}(\rho{\bf H_{\rm b}}{\bf H}_{\rm b}^H + {\mathbf{I}_N}).
\end{align}
In the following section, we investigate a user scheduling problem based on the considered system model.

\section{User Scheduling}
\label{sec:CSIR}

In this section, we focus on zero-forcing (ZF) combining ${\bf W}_{\rm zf} = {\bf H}_{\rm b}({\bf H}_{\rm b}^H {\bf H}_{\rm b})^{-1}$ at the BS and investigate user scheduling to derive scheduling criteria and propose an algorithm by exploiting the obtained criteria.
To this end, we first consider the case where the effective CSI ${\bf H}_{\rm b}$ is known at the BS and then extend the problem to the case where only the partial CSI is available.
{\color{black} For low-resolution ADC systems, state-of-the-art channel estimation techniques have been developed and have shown remarkable estimation accuracy with few-bit ADCs \cite{rusu2015low,wen2016bayes} or even with one-bit ADCs \cite{mo2014channel,choi2016near,li2017channel}.}
With the ZF combiner $\bW_{\rm zf}$, the quantized signal in \eqref{eq:yq} is given as 
\begin{align}
	\nonumber
	{\bf y}^{\rm zf}_{\rm q}
	& = {\bf W}_{\rm zf}^H {\bf y}_{\rm q} = \alpha \sqrt{\rho}{\bf W}_{\rm zf}^H {\bf H}_{\rm b} {\bf s} + \alpha {\bf W}_{\rm zf}^H {\bv} + {\bf W}_{\rm zf}^H {\bf q}.
\end{align}
Nulling out the inter-user interference,
the achievable rate of user $k$ is derived as
\begin{align}
	\label{eq:rate}
	&{r}_k({\bf H}_{\rm b})
	 = \log_2 \left(1+\frac{\alpha^2 \rho}{ {\bf w}_{{\rm zf},k}^H {\bf R}_{{\bf qq}}(\bH_{\rm b}){\bf w}_{{\rm zf},k}+\alpha^2 \|{\bf w}_{{\rm zf},k}\|^2}\right)
\end{align}
Using the achievable rate with quantization error \eqref{eq:rate}, we formulate a user scheduling problem:
\begin{align}
	\label{eq:p1}
	\mathcal{P}1: \quad \mathcal{R}({\bf H}_{\rm b}(\mathcal{S^\star})) =\max_{\mathcal{S} \subset \{1,\dots,K\}:|\mathcal{S}| \leq S} \sum_{k \in \mathcal{S}} {r}_k({\bf H}_{\rm b}(\mathcal{S}))
\end{align}
where ${\mathcal{S}}$ represents the set of scheduled users, ${\bf H}_{\rm b}(\mathcal{S})$ is the beamspace channel matrix of the users in $\mathcal{S}$, and $\cR({\bf H}_{\rm b}(\mathcal{S}))$ is the sum rate of the scheduled users in $\cS$.
Unlike the user scheduling without quantization, which considers the channel orthogonality and the large channel gains, the user scheduling with the coarse quantization needs to consider an additional condition.
\begin{remark}
	\label{rm:intuition}
	To maximize the achievable rate \eqref{eq:rate}, the aggregated beamspace channel gain at each RF chain $\|[{\bf H}_{\rm b}]_{i,:}\|^2$ needs to be minimized to reduce the quantization noise variance ${\bf R}_{\bf qq}$ in addition to forcing the channel orthogonality $({\bf h}_{{\rm b},k}\! \perp\! {\bf h}_{{\rm b},k'},\ k \neq k')$ and maximizing the beamspace channel gain $\|{\bf h}_{{\rm b},k}\|^2$, which reduces $\|{\bf w}_{{\rm zf},k}\|^2$.
\end{remark}
%

\subsection{Analysis of Scheduling Criteria}

We derive the scheduling criteria for channels in the beamspace based on the finding in Remark \ref{rm:intuition} to propose an efficient scheduling algorithm that solves $\mathcal{P}1$ in \eqref{eq:p1}.
To focus on key scheduling ingredients besides the channel magnitude,  we consider the case where the magnitude of each user channel is given in the analysis, i.e., $\|{\bf h}_{{\rm b},k}\| = \sqrt{\gamma_k},\ \forall k$ with $\gamma_k >0$.
Given $\|{\bf h}_{{\rm b},k}\| = \sqrt{\gamma_k},\ \forall k$, we reformulate $\mathcal{P}1$ to the problem of finding the optimal channel matrix that maximizes the uplink sum rate to characterize the channel matrix that fully extracts scheduling gains.
\begin{align}
	\label{eq:p2}
	\mathcal{P}2: \quad \mathcal{R}({\bf H}_{\rm b}^\star) =\max_{{\bf H}_{\rm b}\in \mathbb{C}^{N \times S}}\sum_{k = 1}^{S} r_k({\bf H}_{\rm b}), \quad \text{s.t. } \|{\bf h}_{{\rm b},k}\| = \sqrt{\gamma_k} \ \ \forall k.
\end{align}


To provide geometrical interpretation for the channel matrix analysis, we further adopt the virtual channel representation \cite{sayeed2002deconstructing}, where
each beamspace channel ${\bf h}_{{\rm b},k}$ contains $(N - L_k)$ zeros and $L_k$ complex gains of the $L_k$ channel paths.
We first consider the single user scheduling ($S = 1$) and derive the channel characteristics required to maximize the achievable rate for $\cP 2$.
Then, we utilize the result to derive the scheduling criteria for the multiuser scheduling case. 
\begin{lemma}
	\label{lem:rate1}
	For a single user scheduling ($S=1$), scheduling a user who has the following channel characteristics maximizes the uplink achievable rate in $\cP 2$:
	\begin{enumerate}
		\item[(i)] the largest number of channel propagation paths and 
		\item[(ii)] equal power spread across the beamspace complex gains.
	\end{enumerate}
\end{lemma}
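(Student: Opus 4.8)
The plan is to specialize the achievable-rate expression \eqref{eq:rate} to the single-user case $S=1$, where the ZF combiner collapses to the matched filter $\mathbf{w}_{{\rm zf}} = \mathbf{h}_{{\rm b}}/\|\mathbf{h}_{{\rm b}}\|^2$, and then show that under the magnitude constraint $\|\mathbf{h}_{{\rm b}}\| = \sqrt{\gamma}$ the rate depends on the beamspace channel only through a single scalar functional of the per-beam powers. First I would write $p_i := |h_{{\rm b},i}|^2$ and evaluate the two terms in the denominator of \eqref{eq:rate}. Using $\|\mathbf{w}_{{\rm zf}}\|^2 = 1/\gamma$ together with the diagonal form of $\mathbf{R}_{\bf qq}$ in \eqref{eq:cov}, whose $i$th entry is $\alpha(1-\alpha)(\rho\, p_i + 1)$, I expect to obtain
\[
\mathbf{w}_{{\rm zf}}^H \mathbf{R}_{\bf qq}(\mathbf{h}_{{\rm b}})\,\mathbf{w}_{{\rm zf}}
= \frac{\alpha(1-\alpha)}{\gamma^2}\Big(\rho \textstyle\sum_{i} p_i^2 + \gamma\Big),
\]
so that the entire denominator is an affine and strictly increasing function of $\xi := \sum_i p_i^2 = \sum_i |h_{{\rm b},i}|^4$, with all remaining quantities fixed by $\gamma$, $\rho$, and $\alpha$. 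Since the rate is strictly decreasing in its denominator, maximizing $r_1$ is \emph{equivalent} to minimizing $\xi$.

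Second, I would invoke the virtual channel representation: exactly $L$ of the entries $p_i$ are nonzero, and they satisfy $\sum_i p_i = \gamma$. The remaining task is then the elementary constrained program of minimizing $\sum_i p_i^2$ over the nonnegative simplex of support size $L$. By Cauchy--Schwarz (equivalently, Jensen applied to the convex map $t \mapsto t^2$), one has $\sum_i p_i^2 \ge (\sum_i p_i)^2/L = \gamma^2/L$, with equality if and only if the nonzero powers are all equal to $\gamma/L$. This immediately yields criterion (ii): for a fixed number of paths, equal power spread across the beamspace gains minimizes $\xi$ and hence maximizes the achievable rate.

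Third, to obtain criterion (i), I would treat the minimized value $\xi_{\min}(L) = \gamma^2/L$ as a function of the path count and note that it is strictly decreasing in $L$. Hence, among users of equal channel magnitude, the one with the largest number of distinct-AoA paths admits the smallest attainable $\xi$ and therefore the largest rate. Combining the two monotonicity observations establishes the lemma.

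I do not anticipate a serious analytical obstacle; the one step requiring care is the reduction, namely verifying that the only channel-structure dependence remaining in \eqref{eq:rate} after fixing $\gamma$ is through $\xi$ — in particular that both $\|\mathbf{w}_{{\rm zf}}\|^2$ and the quadratic form $\mathbf{w}_{{\rm zf}}^H \mathbf{R}_{\bf qq}\mathbf{w}_{{\rm zf}}$ collapse to quantities controlled by $\gamma$ and $\xi$ alone. After that, the optimization is a textbook Cauchy--Schwarz argument and the monotonicity in $L$ is immediate. A minor caveat worth flagging explicitly is that ``largest $L$'' is implicitly capped by the number of available RF beams ($L \le N$), which I would state to keep the maximization well posed.
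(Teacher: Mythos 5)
Your proof is correct and follows essentially the same route as the paper's: specialize the ZF combiner to $\mathbf{h}_{\rm b}/\|\mathbf{h}_{\rm b}\|^2$, reduce the rate maximization under $\|\mathbf{h}_{\rm b}\|^2=\gamma$ to minimizing $\sum_{i\in\mathcal{L}}|h_{{\rm b},i}|^4$, show the minimizer is the equal-power allocation $\gamma/L$, and then conclude from the monotonicity of $\gamma^2/L$ in $L$. The only difference is that you certify the inner minimization via Cauchy--Schwarz/Jensen where the paper uses a Lagrangian stationarity argument --- an equivalent (and arguably cleaner, since it directly gives global optimality with the equality condition) justification of the same step.
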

\begin{proof}
The ZF combiner for a single user becomes ${\bf w}_{\rm zf} = {{\bf h}_{\rm b}}/{\|{\bf h}_{\rm b}\|^2}$. 
Then, \eqref{eq:rate} is given as
\begin{align}
	\nonumber
	\mathcal{R}({\bf h}_{\rm b}) 
	& = \log_2 \left(1+\frac{\alpha \rho}{(1-\alpha)\frac{{\bf h}_{\rm b}^H}{\|{\bf h}_{\rm b}\|^2} {\rm diag}\Big(\rho{\bf h}_{\rm b} {\bf h}_{\rm b}^H + {\bf I}_{N}\Big) \frac{{\bf h}_{\rm b}}{\|{\bf h}_{\rm b}\|^2} + \frac{\alpha}{\|\bh_{\rm b}\|^2}}\right) 
		\\ \label{eq:rate1}
	&= \log_2 \Bigg(1+\frac{\alpha \rho \|{\bf h}_{\rm b}\|^4}{{\rho(1-\alpha)} \sum_{i \in \mathcal{L}}{|h_{{\rm b},i}|^4}+ {\|{\bf h}_{\rm b}\|^2}}\Bigg),
\end{align}
where $\mathcal{L}$ is the set of indices of non-zero complex gains in ${\bf h}_{\rm b}$ with $|\mathcal{L}| = L$.
With the constraint of $\|{\bf h}_{\rm b}\| = \sqrt{\gamma}$, the problem of maximizing $\mathcal{R}({\bf h}_{\rm b})$ in \eqref{eq:rate1} reduces to
\begin{align}
	\label{eq:rate1_pf1}
	\min_{{\bf h}_{\rm b}}  \sum_{i \in \mathcal{L}}{|h_{{\rm b},i}|^4}	\quad \text{s.t. } \|{\bf h}_{\rm b}\|^2 = \gamma.
\end{align}
We use Karush-Kuhn-Tucker conditions to solve the reduced problem in \eqref{eq:rate1_pf1}.
Let $x_i = |h_{{\rm b},i}|^2$ for $i = 1,2,\dots,N$. 
The Lagrangian of the problem with a Lagrangian multiplier $\mu$ is given as
\begin{align}
    \nonumber
	\mathfrak{L}({\bf x}, \mu) = \|{\bf x}\|^2 + \mu \bigg(\sum_{i\in \mathcal{L}} x_i - \gamma \bigg).
\end{align}
By taking a derivative of $\mathfrak{L}({\bf x}, \mu)$ with respect to $x_i$ for $i \in \mathcal{L}$ and setting it to zero, we obtain $x_i = -{\mu}/{2}$.
Putting it to $\sum_{i\in \mathcal{L}} x_i = \gamma $, we have $\mu = -2\gamma/L$.
Finally, the solution becomes
\begin{align}
	\label{eq:rate1_pf3}
	x_i = {\gamma}/{L}, \quad i \in \mathcal{L}.
\end{align}
Under the virtual channel representation, $x_i$ indicates the power of the beamspace complex gains and $ L$ is the number of propagation paths.
Accordingly, the physical meaning of \eqref{eq:rate1_pf3} is that the achievable rate for the single user case with the given channel power  $\|{\bf h}_{\rm b}\|^2 = \gamma$ can be maximized when the channel power $\gamma$ is evenly spread to the $L$ beamspace complex gains.
	
By applying the solution $|h_{{\rm b},i}^\star|^2 = \gamma/L$ in  \eqref{eq:rate1_pf3} for $i \in \mathcal{L}$, the achievable rate in \eqref{eq:rate1} becomes 
\begin{align}
	\label{eq:rate1_pf4}
	\mathcal{R}({\bf h}_{\rm b}^\star) 
	= \log_2 \left(1+\frac{\alpha \rho}{{\rho(1-\alpha)}/{L} + {1}/{\gamma}}\right).
\end{align}
The quantization noise variance term in \eqref{eq:rate1_pf4} decreases as $L$ increases.
Therefore, the achievable rate $\mathcal{R}({\bf h}_{\rm b}^\star)$ can be further maximized if the scheduled user channel ${\bf h}_{\rm b}^\star$ has the largest number of propagation paths with equal power distribution across the beamspace complex gains.
\end{proof}
	
Unlike the conventional understanding that scheduling a user with the largest channel gain achieves the maximum achievable rate for the single user communication in the noise limited environment, Lemma \ref{lem:rate1} shows that the achievable rate is related not only to the channel magnitude $\|{\bf h}_{\rm b}\|$ but also to the channel structure in the beamspace when received signals are coarsely quantized.
We further show that if the number of propagation paths $L$ is limited, the maximum rate for the single user case converges to a finite value as the channel magnitude increases. 
\begin{corollary}
	\label{cor:finite}
	With the finite number of channel propagation paths $L$, the maximum achievable rate with single user scheduling converges to  
	\vspace{-0.5em}
	\begin{align}
		\label{eq:rate1_inf}
		\mathcal{R}({\bf h}_{\rm b}^\star) 
		\to \log_2 \left(1+{\alpha L}/{(1-\alpha)}\right), \quad \text{as }\|{\bf h}_{{\rm b}}\| \to \infty.
	\end{align}
\end{corollary}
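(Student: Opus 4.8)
The plan is to leverage the optimal single-user rate expression already obtained in the proof of Lemma~\ref{lem:rate1}, since the substantive work is done there: the power-equalizing solution $|h_{{\rm b},i}^\star|^2 = \gamma/L$ has been identified, and the rate has been reduced to a single-variable function of the channel power $\gamma = \|\bh_{\rm b}\|^2$ for a fixed number of paths $L$. Concretely, I would start from \eqref{eq:rate1_pf4}, namely
\[
\mathcal{R}(\bh_{\rm b}^\star) = \log_2 \left(1+\frac{\alpha \rho}{\rho(1-\alpha)/L + 1/\gamma}\right),
\]
which is exactly the maximum achievable rate for single-user scheduling once $L$ and $\gamma$ are fixed.

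Next I would take the limit $\|\bh_{\rm b}\| \to \infty$, equivalently $\gamma \to \infty$. The only $\gamma$-dependence sits in the $1/\gamma$ term of the denominator, which vanishes in the limit. Since $\log_2(1+\cdot)$ is continuous, the limit passes inside the logarithm, and the surviving ratio $\alpha\rho/\big(\rho(1-\alpha)/L\big)$ simplifies — the SNR $\rho$ cancels in numerator and denominator — to $\alpha L/(1-\alpha)$, which yields \eqref{eq:rate1_inf}.

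There is no genuine analytic obstacle here; the corollary is essentially a one-line limit of an expression already established in Lemma~\ref{lem:rate1}. The conceptual point worth emphasizing in the write-up is \emph{why} the rate saturates: in the quantized model the quantization-noise variance \eqref{eq:cov} scales linearly with the received signal power $\rho\|\bh_{\rm b}\|^2$, so the signal and the dominant noise term grow together and the effective SINR is bounded above by $\alpha L/(1-\alpha)$ rather than diverging. The finiteness of $L$ is essential: the ceiling increases with $L$, consistent with Lemma~\ref{lem:rate1}'s conclusion that more propagation paths with equalized power are beneficial, and allowing $L$ to grow as well would recover the unbounded rate behavior of the unquantized channel.
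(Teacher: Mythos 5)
Your proof is correct and follows exactly the paper's route: both start from the optimal single-user rate \eqref{eq:rate1_pf4} established in Lemma~\ref{lem:rate1} and take the limit $\gamma = \|\bh_{\rm b}\|^2 \to \infty$, letting the $1/\gamma$ term vanish and the $\rho$ factors cancel to yield $\log_2(1+\alpha L/(1-\alpha))$. Your added remarks on why the rate saturates are a nice interpretive bonus but not a different argument.
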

\begin{proof}
	The maximum achievable rate of the single user scheduling with the given $L$ and $\|{\bh_{\rm b}}\|^2 = \gamma$ is derived in \eqref{eq:rate1_pf4}.
	Then, \eqref{eq:rate1_pf4} converges to \eqref{eq:rate1_inf} as increasing the channel gain ($\gamma \to \infty$).
\end{proof}
{\color{black} Corollary \ref{cor:finite} shows that the quantization error ($\alpha < 1$) limits the achievable rate to remain finite because the quantization noise variance also increases with the increase of the channel magnitude. This implies that the conventional scaling law $\log\log K$ \cite{sharif2005capacity} cannot be met in the low-resolution ADCs regime.}
Accordingly, as the SNR increases, mitigation of the quantization error becomes a more critical problem that needs to be considered in user scheduling. 

Now, we investigate the multiuser scheduling for the case where $\sum_{k= 1}^S L_{\mathcal{S}(k)} \leq N$. 
Here, $\cS(k)$ is the $k$th scheduled user.
This condition is relevant to mmWave channels where the number of channel paths $L_k$ is presumably very small \cite{akdeniz2014millimeter}.
We solve the problem $\cP 2$ to characterize the channel properties that maximize the scheduling gain.
Theorem \ref{thm:rate_multi} shows the structural scheduling criteria of channels to maximize the sum rate in $\cP 2$ for the considered case.
\begin{theorem}
	\label{thm:rate_multi}
	For the case where $\sum_{k=1}^{S}L_{\mathcal{S}(k)} \leq N$, scheduling a set of users $\mathcal{S}$ that satisfies the following channel characteristics maximizes the uplink sum rate in $\cP2$.
	\begin{enumerate}
		\item[(i)] Unique AoAs at the receiver for the channel propagation paths of each scheduled user:
		\begin{align} 
			\label{eq:unique AoAs}
			\mathcal{L}_{\mathcal{S}(k)} \cap \mathcal{L}_{\mathcal{S}(k')} = \emptyset \text{ if } k \neq k',
		\end{align} 
		where $\cL_{\mathcal{S}(k)}$ represents the set of indices of non-zero complex gains in $\bh_{{\rm b},{\mathcal{S}(k)}}$.
		\item[(ii)] Equal power spread across the beamspace complex gains within each user channel:
		\begin{align}
			\label{eq:Equal spread}
			|h_{{\rm b},i,\mathcal{S}(k)}| = \sqrt{\gamma_{\mathcal{S}(k)}/L_{\mathcal{S}(k)}} \text{ for } i \in \mathcal{L}_{\mathcal{S}(k)}.
		\end{align} 
	\end{enumerate}
\end{theorem}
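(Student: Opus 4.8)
The plan is to reduce the multiuser objective in $\cP2$ to $S$ decoupled single-user problems: first establish the optimality of condition (i), and then invoke Lemma \ref{lem:rate1} on each resulting single-user problem to obtain condition (ii). To set up, I would rewrite the per-user rate \eqref{eq:rate} using $\bR_{\bf qq}(\bH_{\rm b})$ from \eqref{eq:cov} as
\begin{align}
  r_k(\bH_{\rm b}) = \log_2\!\left(1 + \frac{\alpha\rho}{(1-\alpha)\,\bw_{{\rm zf},k}^H \bD\, \bw_{{\rm zf},k} + \alpha\|\bw_{{\rm zf},k}\|^2}\right), \nonumber
\end{align}
where $\bD = {\rm diag}(\rho\bH_{\rm b}\bH_{\rm b}^H + \bI_N)$ has $i$th diagonal entry $d_i = \rho\,\|[\bH_{\rm b}]_{i,:}\|^2 + 1$, the aggregated beamspace gain at RF chain $i$. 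Maximizing each $r_k$ amounts to minimizing the denominator $T_k = (1-\alpha)\,\bw_{{\rm zf},k}^H\bD\,\bw_{{\rm zf},k} + \alpha\|\bw_{{\rm zf},k}\|^2$, whose two terms correspond precisely to the effects flagged in Remark \ref{rm:intuition}: the zero-forcing noise enhancement $\|\bw_{{\rm zf},k}\|^2$ and the quantization noise weighted by the per-chain gains $d_i$.

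For condition (i), I would argue that disjoint beamspace supports simultaneously minimize both terms. The noise-enhancement term obeys $\|\bw_{{\rm zf},k}\|^2 = [(\bH_{\rm b}^H\bH_{\rm b})^{-1}]_{k,k} \ge 1/\|\bh_{{\rm b},k}\|^2 = 1/\gamma_k$, with equality if and only if $\bh_{{\rm b},k}$ is orthogonal to the remaining columns; this is the standard bound that the diagonal of the inverse of a positive-definite matrix dominates the reciprocal of its diagonal. Disjointness of the supports \eqref{eq:unique AoAs} makes all columns orthogonal, attains this minimum, and yields $\bH_{\rm b}^H\bH_{\rm b} = {\rm diag}(\gamma_1,\dots,\gamma_S)$ and $\bw_{{\rm zf},k} = \bh_{{\rm b},k}/\gamma_k$. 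Crucially, disjoint supports also prevent the per-chain gain from accumulating: at each index $i$ only one user is active, so $d_i = \rho|h_{{\rm b},i,k}|^2 + 1$ on $\mathcal{L}_{\mathcal{S}(k)}$, the smallest $d_i$ consistent with the fixed magnitudes. Substituting $\bw_{{\rm zf},k} = \bh_{{\rm b},k}/\gamma_k$ and these $d_i$ into $T_k$ collapses it exactly to the single-user denominator in \eqref{eq:rate1}, so $r_k(\bH_{\rm b})$ reduces to the single-user rate $\mathcal{R}(\bh_{{\rm b},k})$ and the sum rate decouples as $\cR(\bH_{\rm b}) = \sum_{k=1}^S \mathcal{R}(\bh_{{\rm b},k})$.

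Given the decoupling, condition (ii) follows immediately by applying Lemma \ref{lem:rate1} to each summand: each $\mathcal{R}(\bh_{{\rm b},k})$ is individually maximized by spreading the fixed power $\gamma_{\mathcal{S}(k)}$ equally across its $L_{\mathcal{S}(k)}$ nonzero beamspace gains, i.e. $|h_{{\rm b},i,\mathcal{S}(k)}| = \sqrt{\gamma_{\mathcal{S}(k)}/L_{\mathcal{S}(k)}}$ as in \eqref{eq:Equal spread}. Since equalizing power within a support is compatible with disjointness of the supports, the per-user maximizers are achieved simultaneously, establishing that (i) and (ii) jointly maximize the sum rate in $\cP2$.

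The main obstacle I anticipate is rigorously closing the optimality of condition (i) rather than merely its convenience: orthogonality of the columns can in principle also arise from overlapping supports through phase cancellation, in which case the noise-enhancement term still attains its minimum $1/\gamma_k$ while the quantization term does not. The argument must therefore cleanly separate the two contributions to $T_k$ and show that $\bw_{{\rm zf},k}^H\bD\,\bw_{{\rm zf},k}$ strictly increases whenever two users share a beamspace index, so that disjoint supports are the unique (up to relabeling) minimizer. What makes this comparison delicate is that altering the supports also alters the zero-forcing vectors $\bw_{{\rm zf},k}$ themselves; I would handle it by lower-bounding $T_k$ directly in terms of the fixed magnitudes $\{\gamma_k, L_k\}$ instead of substituting a particular $\bw_{{\rm zf},k}$, and then verifying that the bound is met with equality exactly under \eqref{eq:unique AoAs}--\eqref{eq:Equal spread}.
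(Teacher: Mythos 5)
Your proposal is correct and follows essentially the same two-stage route as the paper's proof: first show that column orthogonality minimizes $\|\bw_{{\rm zf},k}\|^2$, then expand the quadratic form $\bw_{{\rm zf},k}^H\,{\rm diag}(\bH_{\rm b}\bH_{\rm b}^H)\,\bw_{{\rm zf},k}$ so that the cross terms are seen to vanish exactly when the beamspace supports are disjoint, and finish by applying Lemma \ref{lem:rate1} to each decoupled single-user problem. The ``delicate point'' you flag (orthogonality achieved via phase cancellation with overlapping supports) is handled in the paper exactly the way you propose, by lower-bounding the denominator in terms of the fixed magnitudes so that equality in step (b) of \eqref{eq:rate_multi_pf1} forces $|h_{{\rm b},i,u}|=0$ for $u\neq k$, $i\in\mathcal{L}_k$.
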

\begin{proof}
We take a two-stage maximization approach and show the sufficient conditions for maximizing the sum rate in $\cP 2$ with the constraint of $\sum_{k=1}^S L_{\mathcal{S}(k)} \leq N$. 
{\color{black} Using the diagonal structure of $\bR_{\bq\bq}$ as shown in \eqref{eq:cov}, we rewrite \eqref{eq:rate} in a simpler form as}
\begin{align}
	\label{eq:rate_multi_pf}
	r_k({\bf H}_{\rm b}) 
	= \log_2 \left(1+\frac{\alpha \rho}{\rho (1-\alpha){\bf w}_{{\rm zf},k}^H {\rm diag}\Big({\bf H}_{\rm b} {\bf H}_{\rm b}^H\Big){\bf w}_{{\rm zf},k} + \|{\bf w}_{{\rm zf},k}\|^2}\right). 
\end{align}
In the first stage, we focus on minimizing $\|{\bf w}_{{\rm zf},k}\|^2$ in \eqref{eq:rate_multi_pf}. 
When user channels are orthogonal, ${\bf h}_{{\rm b},k} \! \perp\! {\bf h}_{{\rm b},k'}$ for $ k \neq k'$, we have ${\bf w}_{{\rm zf},k} = {\bf h}_{{\rm b},k}/\|{\bf h}_{{\rm b},k}\|^2$.
Since ${\bf w}_{{\rm zf},k}$ with minimum norm is known as ${\bf w}_{{\rm zf},k} = {\bf h}_{{\rm b},k}/\|{\bf h}_{{\rm b},k}\|^2$, $\|{\bf w}_{{\rm zf},k}\|^2$ can be minimized with the orthogonality condition.
	
In the second stage, we minimize the achievable rate of \eqref{eq:rate_multi_pf} by imposing the orthogonality condition from the first stage as follows:
\begin{align}
	\label{eq:rate_multi_pf0}
	r_k({\bf H}_{\rm b}|{\bf h}_{{\rm b},k}\! \perp \! {\bf h}_{{\rm b},k'}) 
	& \stackrel{(a)} = \log_2 \Biggl(1+\frac{\alpha \rho {\|{\bf h}_{{\rm b},k}\|^4} }{\rho (1-\alpha){\bf h}^H_{{\rm b},k}{\rm diag}\Big({\bf H}_{\rm b} {\bf H}_{\rm b}^H\Big) {\bf h}_{{\rm b},k}+ \|{\bf h}_{{\rm b},k}\|^2}\Biggr) 
		\\ \nonumber
	& = \log_2 \left (1+\frac{\alpha \rho \gamma^2_k}{\rho (1-\alpha)  \underset {i\in \mathcal{L}_{k}}{\sum} |h_{{\rm b},i,k}|^2\bigg(|h_{{\rm b},i,k}|^2 + \underset{u\neq k}{\sum} |h_{{\rm b},i,u}|^2 \bigg ) + \gamma_k}\right)
		\\	\label{eq:rate_multi_pf1}
	&\stackrel{(b)} \leq \log_2 \left(1+\frac{\alpha \rho \gamma^2_k}{\rho (1-\alpha)  \sum_{i\in \mathcal{L}_{k}}|h_{{\rm b},i,k}|^4 + \gamma_k}\right)
		\\	\label{eq:rate_multi_pf2}
	&\stackrel{(c)} \leq \log_2 \Bigg(1+\frac{\alpha \rho}{ \rho (1-\alpha) /L_k + 1/\gamma_k}\Bigg).
\end{align}
The equality (a) is from ${\bf w}_{{\rm zf},k} = {{\bf h}_{{\rm b},k}}/{\|{\bf h}_{{\rm b},k}\|^2}$.
The equality in (b) holds if and only if $|h_{{\rm b},i,u}| = 0$, $\forall \, u \neq k$ and $i \in \mathcal{L}_k$.
This implies that each user needs to have channel paths with unique AoAs to maximize the achievable rate.
Note that \eqref{eq:rate_multi_pf1} is equivalent to the achievable rate of the single user scheduling in \eqref{eq:rate1} due to the channel orthogonality and unique AoA conditions.
Consequently, applying Lemma \ref{lem:rate1}, we have the inequality (c) which comes from the fact that \eqref{eq:rate_multi_pf1} is maximized when $|h_{{\rm b},i,k}| = \sqrt{\gamma_k/L_k}$ for $i \in \mathcal{L}_k$, i.e., channel power is spread evenly across the beamspace complex gains within each user channel. 
The upper bound in \eqref{eq:rate_multi_pf2} is equivalent to the maximum achievable rate for the single user case in \eqref{eq:rate1_pf4}. 
Therefore, \eqref{eq:rate_multi_pf2} is also the maximum achievable rate of each user for the problem $\cP 2$, which also maximizes the sum rate in $\cP 2$.
	
Throughout the proof, we have shown that the derived conditions: the orthogonality, the unique AoA, and the equal power spread conditions are sufficient to maximize the sum rate in $\cP 2$ for the case of $\sum_{k=1}^S L_{\mathcal{S}(k)} \leq N$.
Since, the unique AoA condition implies the orthogonality, only the unique AoA and equal power spread conditions are required to be satisfied by the beamspace channel matrix ${\bf H}_{\rm b}$ for maximizing the uplink sum rate.
This completes the proof.
\end{proof}

Distinguished from conventional channels, there are channel orthogonality cases related to mmWave massive MIMO communications: (a) asymptotic orthogonality of array steering vectors across different angles \cite{ngo2014aspects}, (b) 
orthogonality of beamspace channel sub-vectors having common AoAs, and (c) orthogonality of array steering vectors in \eqref{eq:ARV} with  angle offsets of multiples of $2/M$ \cite{lee2016performance}.
Note that the first condition in \eqref{eq:unique AoAs} particularly emphasizes the third case (c) which forces the beamspace channel orthogonality and further minimizes the aggregated channel gain at each RF chain by avoiding overlap between channel gains in the same AoA, which reduces the quantization noise variance as discussed in Remark~\ref{rm:intuition}.
The second condition in \eqref{eq:Equal spread} also minimizes the aggregated channel gain by evenly spreading the channel power across the beamspace gains, and thus, reduces the quantization error.
Consequently, Theorem \ref{thm:rate_multi} emphasizes the importance of the channel structure in maximizing the sum rate under coarse quantization,
while conventional user scheduling approaches ignore such criteria.
{\color{black} 
Therefore, we propose a quantization-aware scheduling algorithm based on the criteria in Theorem \ref{thm:rate_multi}.
Although the scheduling criteria in Theorem \ref{thm:rate_multi} is derived under the condition of $\sum_{k=1}^{S}L_{\mathcal{S}(k)} \leq N$, we show that the proposed algorithm which exploits the criteria still achieves higher performance compared to conventional algorithms for $\sum_{k=1}^{S}L_{\mathcal{S}(k)} > N$ in Section \ref{sec:simulation}. 
}



\subsection{Proposed Algorithm}
\begin{algorithm}[!t]
\label{algo:CSS}
\vspace{.3em}
 {\bf Initialization}: $\mathcal{K}_1 = \{1,\dots, K\}$, $\mathcal{S} = \phi$, and $i = 1$.\\
 \For{k = 1:K}{
 BS stores $N_b \geq L_k$ indices of dominant spatial angles of ${\bf h}_{{\rm b}, k}$ in $\mathcal{B}_k$ .
 }
{\bf Iteration}: 
 \While{$i \leq S$ and $\cK_i \neq \emptyset$ }{

   \For{$k \in \cK_i$}{
   BS computes approximated SINR of user $k$,  ${\rm SINR}_k\big({\bf H}_{\rm b}(\mathcal{S}\cup \{k\})\big)$ in \eqref{eq:sinr}.
}
    BS schedules user who has the largest SINR as
	\begin{align}
		\label{eq:selection}
		\mathcal{S}(i) = \argmax_{k \in \mathcal{K}_i} {\rm SINR}_k\big({\bf H}_{\rm b}(\mathcal{S}\cup \{k\})\big)
	\end{align}
    and updates scheduled user set $\mathcal{S}= \mathcal{S} \cup \{\mathcal{S}(i) \}$.\\
	Then, BS computes orthogonal component ${\bf f}_{\mathcal{S}(i)}$ for filtering as in \eqref{eq:f}.

  Using ${\bf f}_{\mathcal{S}(i)}$ and $\cB_{\mathcal{S}(i)}$, BS filters candidate set $\cK_i$ as in \eqref{eq:semi-orthogonal}
  and sets $i = i+ 1$;
 }
 \Return{\ }{Scheduled user set $\cS$}\;
 \caption{Channel Structure-based Scheduling (CSS)}
\end{algorithm}

In this subsection, we propose a user scheduling algorithm with low complexity by using the criteria in Theorem \ref{thm:rate_multi}.
{\color{black} Adopting a greedy manner, the proposed algorithms make it possible to schedule users without examining all combinations of users.}
At each iteration, the proposed algorithm schedules a user and reduces the size of a user candidate set $\cK$ through filtering.
To extract user diversity, the algorithm filter the user set $\cK$ by enforcing semi-orthogonality between scheduled user channels, not perfect orthogonality.
In addition to the scheduling criteria in Theorem \ref{thm:rate_multi}, we also apply the orthogonality condition in \eqref{eq:rate_multi_pf0} for the filtering to provide higher precision in the semi-orthogonality. 

Algorithm \ref{algo:CSS} describes the proposed scheduling method, called channel structure-based scheduling (CSS).
After each user selection, the proposed algorithm filters the user candidate set $\cK$ by leveraging the orthogonality condition in \eqref{eq:rate_multi_pf0} as in \cite{yoo2006optimality} by utilizing \eqref{eq:f}
\begin{align}
	\label{eq:f}
	{\bf f}_{\mathcal{S}(i) } = {\bf h}_{{\rm b},\mathcal{S}(i) } - \sum_{j = 1}^{i-1}  \frac{{\bf f}_{\mathcal{S}(j)}^H{\bf h}_{{\rm b},\mathcal{S}(i) }}{\|{\bf f}_{\mathcal{S}(j)}\|^2}{\bf f}_{\mathcal{S}(j)}
	=\bigg({\bf I} - \sum_{j = 1}^{i-1} \frac{{\bf f}_{\mathcal{S}(j)}{\bf f}_{\mathcal{S}(j)}^H}{\|{\bf f}_{\mathcal{S}(j)}\|^2} \bigg){\bf h}_{{\rm b},\mathcal{S}(i)}
\end{align}
where ${\bf f}_{\mathcal{S}(i) }$ is the component of ${\bf h}_{{\rm b}, \mathcal{S}(i)}$ that is orthogonal to subspace ${\rm span}\{{\bf f}_{\mathcal{S}(1)},\dots,{\bf f}_{\mathcal{S}(i-1)}\}$.
Unlike the algorithm in \cite{yoo2006optimality} which computes the orthogonal component ${\bf f}_k$  for the entire users in the candidate set, the proposed CSS algorithm calculates ${\bf f}_{\mathcal{S}(i)}$ only for the currently scheduled user $\cS(i)$.
The algorithm also enforces additional spatial orthogonality in the beamspace to the filtered set as in \cite{lee2016performance} by modifying the unique AoA condition in \eqref{eq:unique AoAs}.
Since there can exist phase offsets that lead to more than $L_k$ dominant channel gains in ${\bf h}_{{\rm b},k}$ due to the quantized angles of the analog combiner, the algorithm stores $N_b \geq L_k$ indices of dominant spatial angles in $\cB_k$ and filters the user set $\cK$ by removing users whose angle indices in $\cB_k$ show more than $N_{\rm OL}$ overlaps with those of the scheduled user in $\cB_{\mathcal{S}(i)}$. 
The semi-orthogonality filtering becomes 
\begin{align}
	\label{eq:semi-orthogonal}
	\mathcal{K}_{i+1} = \bigg\{k\in \mathcal{K}_i\setminus\{\mathcal{S}(i) \}\ |&\ \frac{|{\bf f}_{\mathcal{S}(i)}^H{\bf h}_{{\rm b},k}|}{\|{\bf f}_{\mathcal{S}(i)}\| \|{\bf h}_{{\rm b},k}\|}< \epsilon, |\mathcal{B}_{\mathcal{S}(i)} \cap \mathcal{B}_k | \leq N_{\rm OL}  \bigg\}. 
\end{align}

These filtering operations not only reduce the size of the user set $\cK$, but also offer semi-orthogonality between the scheduled users in $\cS$ and the candidate users in $\cK$.
As a result, the filtering leads the ZF combiner to be approximated as ${\bf w}_{{\rm zf},k} \approx {\bf h}_{{\rm b},k}/\|{\bf h}_{{\rm b},k}\|^2$ for a user $k \in \cK$, and 
we can approximate the SINR of user $k \in \cK$ with previously scheduled users in $\cS$  as
\begin{align}
	\label{eq:sinr}
	{\rm SINR}_k\big({\bf H}_{\rm b}(\cS\cup\{k\})\big) \approx \frac{\alpha \rho \|{\bf h}_{{\rm b},k}\|^4}{(1-\alpha){\bf h}_{{\rm b},k}^H\, {\bf D}\big({\bf H}_{\rm b}(\cS\cup\{k\})\big)\, {\bf h}_{{\rm b},k}}
\end{align}
where ${\bf D}\big({\bf H}_{\rm b}(\cS\cup\{k\})\big) = {\rm diag}\big(\rho\,{\bf H}_{\rm b}(\cS\cup\{k\}){\bf H}_{\rm b}(\cS\cup\{k\})^H + \frac{1}{1-\alpha}{\bf I}_{N}\big)$.
For a scheduling measure, the proposed algorithm adopts the approximated SINR \eqref{eq:sinr} to incorporate the scheduling criteria in Theorem \ref{thm:rate_multi} with the channel magnitude and orthogonality\footnote{\color{black} By treating the approximate SINR as the true SINR and following the technique used in \cite{yoo2006optimality} and \cite{lee2016performance}, the proposed method can be incorporated with the proportional fairness (PF) policy \cite{Viswanath02IT} for fairness-aware scheduling in a heterogeneous system.}.
At each iteration, the algorithm schedules the user who has the largest SINR among the users in $\cK$ as shown in \eqref{eq:selection}.
Using the approximated SINR \eqref{eq:sinr} for the selection measure greatly reduces the computational complexity by avoiding the matrix inversion for computing the ZF combiner ${\bf W}_{\rm zf}$.


\begin{algorithm}[!t]
\label{algo:greedy}
\vspace{.3em}
{\bf Initialization}: $\mathcal{K}_{{\rm G},1} = \{1,\dots,K\}$, $\mathcal{S}_{\rm G} = \emptyset$, and $i = 1$.\\
{\bf Iteration}: 
 \While{$i \leq S_{\rm G}$}{

   \For{$k \in \cK_{{\rm G},i}$}{
   Compute sum rate using $r_j$ in \eqref{eq:rate} for scheduled users and each user $k \in \cK_{{\rm G},i}$ as
   \begin{align}
		\cR_k = \sum_{j \in {\mathcal{S}_{\rm G} \cup \{k\}}} r_j\big({\bf H}_{\rm b}(\mathcal{S}_{\rm G}\cup \{k\})\big)
	\end{align}
	}
    BS schedules user who maximizes sum rate as $\mathcal{S}_{\rm G}(i) = \argmax_{k \in \mathcal{K}_{{\rm G},i}} \cR_k$
	and 
	
	updates $\mathcal{K}_{{\rm G},i+1} =\mathcal{K}_{{\rm G},i}\setminus \{\mathcal{S}_{\rm G}(i)\}$, $\mathcal{S}_{\rm G} = \mathcal{S}_{\rm G} \cup \{\mathcal{S}_{\rm G}(i) \}$, and $i = i + 1$;
 }
 \Return{\ }{Scheduled user set $\cS_{\rm G}$}\;
 \caption{Greedy Max-Sum Rate Scheduling}
\end{algorithm}

To provide a reference in sum rate performance, we also propose a high-complexity and high-performance greedy algorithm which schedules the user who achieves the highest sum rate at each iteration as shown in Algorithm \ref{algo:greedy}.
At each iteration, the greedy algorithm computes sum rate in \eqref{eq:rate}, i.e., the algorithm computes the exact SINR for scheduled users in $\cS_{\rm G}$ and a candidate user $k$, $\forall k \in \cK_{{\rm G},i}$.
Thus, the algorithm carries the huge burden of computing a matrix inversion $|\cK_{{\rm G},i}|$ times at each selection.
At the $i$th stage, the greedy algorithm computes the achievable rate in \eqref{eq:rate} $|\mathcal{K}_{{\rm G},i}| \times i$ times and compares the derived $|\mathcal{K}_{{\rm G},i}|$ sum rates, whereas the CSS algorithm only computes the approximated SINR in \eqref{eq:sinr} $|\mathcal{K}_i|$ times and compares $|\mathcal{K}_i|$ SINRs.
Moreover, unlike the greedy algorithm, the CSS algorithm reduces the size of the user set $\mathcal{K}_i$ by filtering in \eqref{eq:semi-orthogonal} at each iteration. 
This leads to $|\mathcal{K}_i| \ll |\mathcal{K}_{{\rm G},i}|$, and the gap $|\mathcal{K}_{{\rm G},i}|- |\mathcal{K}_i|$ will increase with iteration; the CSS algorithm becomes more efficient with larger $K$ and\,/or $S$.

{\color{black}
\begin{remark}
    The proposed algorithm can be applied to an orthogonal frequency division multiplexing (OFDM) system for a wideband channel case. Since we consider the system with a given analog combiner, the proposed algorithm can be performed independently for each subcarrier index $i$. However, the structure of the quantization noise $\bq[i]$ in the wideband OFDM system becomes different from that of the narrowband system so that the spatial filtering in the proposed user scheduling algorithm may not be desirable. 
    Nonetheless, the approximated SINR can still be applicable with the semi-orthogonality filtering by computing the quantization noise variance for each subcarrier $i$ of the OFDM system $\bR_{\bq\bq}[i]$.
    Thus, the BS can perform the proposed algorithms to schedule users to be served on each subcarrier by relaxing the spatial filtering.
\end{remark}
}

We note that the proposed method schedules users with minimum overlap among quantized AoAs of user channels to satisfy the derived scheduling criterion (i) in Theorem \ref{thm:rate_multi}. 
Accordingly, by using the proposed scheduling method, the beamforming-based Doppler effect reduction techniques such as a per-beam synchronization approach in \cite{you2017bdma} can be performed at the BS since the BS can see each beam with a single dedicated user signal with large channel gains and possibly with other user signals with negligible channel gains. 
Therefore, the proposed scheduling method can provide potential benefit in reducing Doppler effect when jointly used with Doppler effect mitigation techniques at the BS.

{\color{black}
\subsection{Beam Training-Based Channel Acquisition}\label{subsec:CQI}
Assuming time-division duplex communications, we briefly provide an example of extension our algorithm to a practical system where the BS uses beam training and receives channel quality indicators (CQIs) from users. 
A procedure of beam training and CQI feedback can be as follows:
\begin{enumerate}
    \item \noindent The BS constructs a set of $N_s \geq N$ beam vectors $\{\ba(\bar\vartheta_1),\dots,\ba(\bar\vartheta_{N_s})\}$ with the angles within the angles of the analog combiner $\tilde\bA$, i.e., there exists $i$ such that $\bar\vartheta_n \in [\vartheta_i - 1/M, \vartheta_i + 1/M]$, $\forall n$, where $\vartheta_i$ is the spatial angle of the $i$th analog beamformer.
    Then, the BS transmits each beam of the set in time to all users in the cell during a training phase. 
    \item Each user $k$ can estimate the channel gain corresponding to each beam and have the estimate of $\bh_k^H\tilde{\bA}=\bh_{{\rm b},k}^H$ at the end of the beam training. 
    From the sparsity of the mmWave channel, few elements of $\bh_{{\rm b},k}$ have non-negligible beam gains and we can implement an efficient feedback method that exploits the sparsity of the effective channel $\bh_{{\rm b},k}$ as described in \cite{he2017codebook}.
    For instance, each user can feed back the beam indices of the non-negligible beam gains and their corresponding channel coefficients in a quantized form to the BS.
    \item After the feedback from all users is over, the BS can create an estimate of $\bH_b$ with the feedback information by simply padding zeros in the unreported beam indices. Then, the BS can directly apply the proposed scheduling algorithm by using the estimated channel.  
\end{enumerate}

The estimation error with the CQI feedback is expected to degrade both the proposed algorithm and conventional scheduling algorithms since full CSI is required for all approaches. 
We leave the analysis of the imperfect CSI case for a future work as it is beyond the scope of our work. 
}

\section{User Scheduling with Partial Channel Information}
\label{sec:pCSIR}

In this section, we propose a user scheduling algorithm when only partial CSI is known at the BS since it can be challenging to obtain reliable instantaneous CSI estimates for entire users as the number of antennas or users becomes large.
A reasonable alternative is to use slowly-varying channel characteristics, in particular, AoAs of mmWave channels \cite{park2017spatial};
AoAs persist over longer than the coherence time of mmWave channels, and mmWave channels have a limited number of AoAs.
In this regard, by using the AoA knowledge, the proposed algorithm can greatly reduce the burden of estimating instantaneous full CSI at each channel coherence time.
After scheduling, we assume that the BS acquires the effective CSI of the scheduled users for decoding.


\subsection{Proposed Algorithm}

According to \eqref{eq:channel_geo}, the channel $\bh_k$ lies in the subspace spanned by its array response vectors, i.e., $\bh_k \in {\rm span}\{\ba(\phi_{k,1}),\dots,\ba(\phi_{k,L_k})\}$.
To measure the separation between the subspaces, we adopt chordal distance which measures the angle between the subspaces.
In the initialization phase, the algorithm removes users whose AoAs are not in the range of angles of RF chains (reduced range of angles)\footnote{The range of angles of RF chains indicates the set of angles corresponding to $\bigcup_{i}\{\vartheta:|\vartheta -\vartheta_i|<\frac{1}{M}\}$, i.e., the AoAs in the reduced range of angles are $\varphi_{k,\ell} \in  \bigcup_{i}\{\vartheta:|\vartheta -\vartheta_i|<\frac{1}{M}\}$.}
from the initial candidate user set $\cK_{{\rm cd},1}$.
In the scheduling phase, a first user is scheduled by randomly selecting a user among the set of users with the most AoAs in the reduced range of angles.
To schedule a next user, the algorithm updates the candidate user set $\cK_{{\rm cd},i}$ by filtering users whose chordal distance is shorter than the threshold $d_{\rm th}$ to impose semi-orthogonality among scheduled users. 
Due to the filtering, the remaining users in $\cK_{{\rm cd},i+1}$ are guaranteed to have a certain level of orthogonality with the scheduled users $\cS(j)$ for $j = 1,2,\dots, i-1$.
Then, the algorithm schedules the user with the longest chordal distance among the remaining users with the most AoAs in the reduced range of angles.

To this end, we generate the matrix of array response vectors for each user by exploiting the AoA knowledge as ${\bf A}_k = [{\bf a}(\phi_{k,\mathcal{V}_k(1)}),\dots,{\bf a}(\phi_{k,\mathcal{V}_k(V_k)})]$ where $\cV_k$ is the set of AoAs indices within the reduced range of angles for user $k$ and $V_k = |\cV_k|$.
Let $\cA_k = {\rm span}\{\bA_k\}$ is the subspace for user $k$.
The chordal distance between the two subspaces $(\cA_k$, $\cA_{k'})$ is defined as $d_{\rm cd}(k, {k'}) = \sqrt{\sum_{\ell=1}^{L_{min}}\sin^2\theta_\ell}$
where $L_{min} = \min\{L_k,L_{k'}\}$ and $\theta_\ell \leq \pi/2$ is the principal angle between $\cA_{k}$ and $\cA_{k'}$.
Let ${\bf Q}_k$ be the unitary matrix whose columns are orthonormal basis vectors of $\cA_k$.
{\color{black} According to  \cite{conway1996packing}, we rewrite $d_{\rm cd}(k,k')$ as 	$d_{\rm cd}\left({k},{k'}\right) = \sqrt{L_{min}-{\rm tr}\left({\bf Q}^H_{k}{\bf Q}_{k'}{\bf Q}^H_{k'}{\bf Q}_{k}\right)}$. }
The proposed chordal distance-based user scheduling method is described in Algorithm \ref{algo:chordal}.

\begin{algorithm}[!t]
\label{algo:chordal}
\vspace{.3em}
 {\bf Initialization}: $\mathcal{K}_{{\rm cd},1} = \{1,\dots, K\}$, $\mathcal{S}_{\rm cd} = \phi$, and $i = 1$\\
 \For{k = 1:K}{
 Let $\cV_k$ be set of AoA indices in range of angles of steering vectors for user $k$. \\
 If  $\cV_k= \emptyset$, do $\cK_{{\rm cd},1} =\cK_{{\rm cd},1} \setminus \{k\} $, otherwise, set ${\bf A}_k = [{\bf a}(\phi_{k,\mathcal{V}_k(1)}),\dots,{\bf a}(\phi_{k,\mathcal{V}_k(V_k)})]$.\\
 Generate unitary matrix ${\bf Q}_k = $ column basis of ${\bf A}_k$.}
 {\bf Iteration}: 
 \While{$i \leq S_
 {\rm cd}$ and $\cK_{{\rm cd},i} \neq \emptyset$ }{
  \eIf{i = 1}{
   Randomly schedule first user $\cS_{\rm cd}(1) \in \cK_{{\rm cd},1}$ among users with largest $|\cV_k|$.\\
   Update candidate user set $\cK_{{\rm cd},2} = \cK_{{\rm cd},1} \setminus \cS_{\rm cd}(1)$ and $\cS_{\rm cd} = \cS_{\rm cd} \cup \{\cS_{\rm cd}(1)\}$.  \\
   }{
   \For{$k \in \cK_{{\rm cd},i}$}{
   Let $L_{min} = \min\{L_{\mathcal{S}_{\rm cd}(i-1)},L_k\}$, and compute chordal distance as
   \begin{align}
   		\label{eq:dcd_algorithm}
   		d_{\rm cd}\left({\mathcal{S}_{\rm cd}(i-1)},k\right) = \sqrt{L_{min}-{\rm tr}\left({\bf Q}^H_{\mathcal{S}_{\rm cd}(i-1)}{\bf Q}_{k}{\bf Q}^H_{k}{\bf Q}_{\mathcal{S}_{\rm cd}(i-1)}\right)}.
   \end{align}
}
Filter candidate user set based on chordal distance computed in \eqref{eq:dcd_algorithm}
   \begin{align}
   		\label{eq:candidate_set_update}
 	  	\cK_{{\rm cd},i+1} = \big\{k\in \cK_{{\rm cd},i}\ \big| \ {{d_{\rm cd}\left({\mathcal{S}_{\rm cd}(i-1)},k\right)}}/{\sqrt{L_{min}}}> d_{\rm th} \big\}.
   \end{align}\\
   Let $\cU$ be set of users with largest $|\cV_k|$, $\forall k \in \cK_{{\rm cd},i+1}$.
   Schedule user in $\cU$ as
   \begin{align}
   		\label{eq:max_dcd}
   		\cS_{\rm cd}(i) = \argmax_{k \in \mathcal{U}}\, d_{\rm cd}\left({\mathcal{S}_{\rm cd}(i-1)},k\right).
   \end{align}\\
	Then, update $\mathcal{K}_{{\rm cd},i+1} = \mathcal{K}_{{\rm cd},i+1}\setminus \{\cS_{\rm cd}(i)\}$ and $\cS_{\rm cd} = \cS_{\rm cd} \cup \{\cS_{\rm cd}(i)\}$. 
	   }
    Set $i = i+ 1$;
 }
 \Return{\ }{Scheduled user set $\cS_{\rm cd}$}\;
 \caption{Chordal Distance-based User Scheduling}
\end{algorithm}

Let $\tilde{\bf h}_k = \sqrt{\frac{M}{L_k}}\sum_{i \in \mathcal{V}_k} g_{k,i}{\bf a}(\phi_{k,i})$.
Then, the algorithm provides an opportunity to schedule users with nearly $\tilde{\bf h}_k \perp \tilde {\bf h}_{k'}$ while the effective channel that the BS sees is the beamspace channel ${\bf h}_{{\rm b},k} = {\bf W}_{\rm RF}^H{\bf h}_k$.
Since the AoAs $\phi_{k,i}$, $i \in \cV_k$ are in the range of angles of RF chains, $\tilde {\bf h}_{k}$ can be regarded to be in the subspace of ${\bf W}_{\rm RF}$, i.e., almost $\tilde {\bf h}_{k} \in {\rm span}\{{\bf W}_{\rm RF}\}$\footnote{If the AoAs  of $\tilde{\bf h}_k$ exactly align with the quantized angles of the analog combiner, $\tilde{\bf h}_k$ perfectly lies in the subspace of ${\bf W}_{\rm RF}$.}.
{\color{black} Accordingly, using ${\bf W}_{\rm RF}^H{\bf W}_{\rm RF} = {\bf I}_N$ which comes from the definition i.e., a sub-matrix of the DFT matrix $\bW_{\rm RF} = \tilde \bA$, we can rewrite $\tilde {\bf h}_{k}$ as }
\begin{align}\label{eq:h_k_projection}
\tilde{\bf h}_k \approx {\bf W}_{\rm RF}({\bf W}_{\rm RF}^H{\bf W}_{\rm RF})^{-1}{\bf W}_{\rm RF}^H\tilde{\bf h}_k = {\bf W}_{\rm RF}{\bf W}_{\rm RF}^H \tilde{\bf h}_k
\end{align}
In addition, we have ${\bf h}_{{\rm b},k} = {\bf W}_{\rm RF}^H{\bf h}_k \approx {\bf W}_{\rm RF}^H\tilde{\bf h}_k$ as the impact of ${\bf a}(\phi_{k,j})$, $\forall j \notin \cV_k$ on the beam domain channel ${\bf h}_{b,k}$ is relatively small compared to that of ${\bf a}(\phi_{k,i})$, $\forall i \in \cV_k$ after analog combining.
In this regard, as the algorithm gives $\tilde{\bf h}_k \perp \tilde {\bf h}_{k'}$, we can nearly have ${\bf h}_{{\rm b},k}  \perp {\bf h}_{{\rm b},k'}$ by 
\begin{align*}
	\epsilon = \tilde{\bf h}_{k}^H\tilde{\bf h}_{k'}	
	\overset{(a)}{\approx} \tilde{\bf h}_{k}^H{\bf W}_{\rm RF}{\bf W}_{\rm RF}^H{\bf W}_{\rm RF}{\bf W}_{\rm RF}^H\tilde{\bf h}_{k'} 
	= \tilde{\bf h}_{k}^H{\bf W}_{\rm RF}{\bf W}_{\rm RF}^H\tilde{\bf h}_{k'} 
	\overset{(b)}{\approx} {\bf h}_{{\rm b},k}^H{\bf h}_{{\rm b},k'}
\end{align*}
where $(a)$ is from \eqref{eq:h_k_projection} and $(b)$ is from ${\bf h}_{{\rm b},k} \approx {\bf W}_{\rm RF}^H\tilde{\bf h}_k$. 
Thus, the proposed algorithm guarantees a certain level of orthogonality between the beamspace channels of the scheduled users.

{\color{black} As discussed in Section \ref{subsec:CQI}, the beam indices for non-negligible channel gains can be obtained by using CQI feedback, i.e., AoAs can be estimated for each user. 
When the capacity of amount of feedback is limited and small, such beam index-only feedback which requires only few integer numbers can be applied to faciliate the proposed chordal distance-based algorithm.}



\subsection{Ergodic Rate Analysis}

We now analyze the performance of the chordal distance-based algorithm in ergodic rate.
We focus on the case where each channel has a single propagation path, which corresponds to the sparse nature of mmWave channels \cite{lee2016randomly}, and the number of RF chains are equal to the number of antennas $N = M$ in the analysis.
\begin{remark}
	\label{rm:chordal_L1}
	When there is a single path for each user channel, the filtering in \eqref{eq:candidate_set_update} reduces to $\cK_{{\rm cd},i+1} = \big\{k\in \cK_{{\rm cd}, i}  \ \big|\ |{\bf a}^H(\phi_{\mathcal{S}(i-1)}){\bf a}(\phi_k)| < \epsilon_{\rm th} \big\}$ where $\epsilon_{\rm th} \ll 1$, and the scheduling problem in \eqref{eq:max_dcd} becomes $\cS_{\rm cd}(i) = \argmin_{k \in \mathcal{K}_{{\rm cd},i+1}} |{\bf a}^H(\phi_{\mathcal{S}(i-1)}){\bf a}(\phi_k)|$.
\end{remark}
Based on Remark \ref{rm:chordal_L1}, we derive closed-form expressions of the ergodic sum rate for two different cases: (1) AoAs of channels exactly align with the quantized angles of the analog combiner, and (2) channels have arbitrary AoAs regardless of the quantized angles of the analog combiner. 
For the first case, there is no channel leakage in the beamspace and thus, it is often considered as a more favorable channel condition since it improves communication performance such as channel estimation accuracy \cite{sung2018narrowband} and achievable rate \cite{alkhateeb2014mimo, el2014spatially}.

\begin{proposition}
\label{prp:ergodicL1}
	When AoAs of channels exactly align with the quantized angles of the analog combiner with a single propagation path, the ergodic sum rate for $|\cS_{\rm cd}| = S$ scheduled users with the proposed chordal distance-based scheduling algorithm is derived as
	\begin{align}
		\label{eq:ergodicL1}
		\bar{\mathcal{R}}_1 =\frac{S}{\ln 2}\left(e^\frac{1}{\rho M}\,\Gamma\left(0,\frac{1}{\rho M}\right)-e^\frac{1}{\rho (1-\alpha) M }\,\Gamma\left(0,\frac{1}{\rho (1-\alpha)M}\right)\right)
	\end{align}
	where $\Gamma(a,z)$ is an incomplete gamma function defined as $\Gamma(a,z) = \int_{z}^{\infty} t^{a-1}e^{-t}\,dt$.
\end{proposition}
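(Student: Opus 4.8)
The plan is to reduce the sum rate to $S$ identical per-user ergodic rates and then evaluate a single expectation over an exponential random variable. First I would pin down the beamspace support. With $L_k=1$, equation \eqref{eq:channel_geo} gives $\bh_k=\sqrt{M}\,g_{k,1}\,\ba(\phi_{k,1})$ after power control, and since $N=M$ makes $\tilde\bA$ a full unitary DFT matrix, orthonormality of its columns forces $\bh_{{\rm b},k}=\tilde\bA^H\bh_k$ to have exactly one nonzero entry, equal to $\sqrt{M}\,g_{k,1}$, whenever $\phi_{k,1}$ coincides with a quantized angle. Hence $\|\bh_{{\rm b},k}\|^2=M|g_{k,1}|^2$. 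Because the chordal-distance filtering in Remark \ref{rm:chordal_L1} rejects any user sharing a beam with an already scheduled user (their inner product is $1>\epsilon_{\rm th}$), the $S$ scheduled channels occupy distinct beams and are therefore \emph{exactly} orthogonal in the beamspace.

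Next I would invoke the orthogonal-channel rate. Distinct single beams decouple the link into $S$ parallel single-user channels whose quantization noise at each beam involves only that beam's own power, so the per-user rate is exactly the single-user expression \eqref{eq:rate1_pf4} with $L=1$, namely $r_k=\log_2\!\big(1+\alpha\rho/((1-\alpha)\rho+1/\gamma_k)\big)$ with $\gamma_k=\|\bh_{{\rm b},k}\|^2$. The key algebraic step is to rewrite $1+{\rm SINR}$ as the ratio $(1+\rho M X)/(1+(1-\alpha)\rho M X)$ with $X=|g_{k,1}|^2$, so that $r_k=\log_2(1+\rho M X)-\log_2(1+(1-\alpha)\rho M X)$ splits into a no-quantization term and a quantization-loss term.

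The main computational step is the ergodic expectation. Since $g_{k,1}\sim\cC\cN(0,1)$, we have $X\sim{\rm Exp}(1)$, and I would use the identity $\bbE[\ln(1+cX)]=e^{1/c}\,\Gamma(0,1/c)$, established by integration by parts (the boundary terms vanish) followed by the substitutions $t=1+cx$ and then $s=t/c$, which map the remaining integral onto the exponential-integral form $\int_{1/c}^{\infty}e^{-s}/s\,ds=\Gamma(0,1/c)$. Applying this with $c=\rho M$ and $c=(1-\alpha)\rho M$ and dividing by $\ln 2$ produces precisely the two terms in \eqref{eq:ergodicL1}.

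Finally, because the $g_{k,1}$ are i.i.d.\ across users, all $S$ scheduled users share the same ergodic rate, so the sum is simply $S$ times the single-user expectation, giving the stated closed form. The only genuine obstacle is the expectation integral; everything else is bookkeeping about which beam each user occupies. I would take care to verify that exact AoA alignment makes the orthogonality — and hence the decoupling and SINR simplification — \emph{exact} rather than approximate, as this is exactly what separates this scenario from the leakage case analyzed afterward.
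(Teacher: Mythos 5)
Your proposal is correct and follows essentially the same route as the paper's proof in Appendix A: exact AoA alignment plus the chordal-distance filtering forces the $S$ single-path users onto distinct beams, the ZF combiner collapses to $\bh_{{\rm b},k}/\|\bh_{{\rm b},k}\|^2$, the per-user SINR reduces to $\alpha\rho/\bigl((1-\alpha)\rho+1/(M|g_k|^2)\bigr)$, and the expectation over $|g_k|^2\sim{\rm Exp}(1)$ yields the two incomplete-gamma terms. Your explicit rewriting of $1+{\rm SINR}$ as $(1+\rho MX)/(1+(1-\alpha)\rho MX)$ and the derivation of $\bbE[\ln(1+cX)]=e^{1/c}\Gamma(0,1/c)$ just make transparent the computation the paper performs implicitly in step (b) of its proof.
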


\begin{proof}
	See Appendix \ref{appx:proposition1}.
\end{proof}
\begin{corollary}
	The derived ergodic rate \eqref{eq:ergodicL1} can be expressed as the sum of the ergodic rate without quantization error $\bar{\cR}_{\rm inf}$ and the ergodic rate loss due to quantization error $\bar{\cR}_{\rm loss}(\alpha)$
	\begin{align}
		\bar\cR_1 = \bar{\cR}_{\rm inf} + \bar{\cR}_{\rm loss}(\alpha)
	\end{align}
	 where $\bar{\cR}_{\rm inf} = \frac{S}{\ln 2}e^{\frac{1}{\rho M}} \Gamma (0,\frac{1}{\rho M})$ and $\bar{\cR}_{\rm loss}(\alpha) = -\frac{S}{\ln 2}e^{\frac{1}{\rho(1-\alpha)M}} \Gamma (0,\frac{1}{\rho(1-\alpha) M})$.
\end{corollary}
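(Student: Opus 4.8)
The plan is to recognize that the asserted identity is simply a regrouping of the two summands already present in the closed form \eqref{eq:ergodicL1}: the first summand is exactly $\bar{\cR}_{\rm inf}$ and the negative of the second is exactly $\bar{\cR}_{\rm loss}(\alpha)$, so that $\bar\cR_1 = \bar{\cR}_{\rm inf}+\bar{\cR}_{\rm loss}(\alpha)$ holds by inspection. The substantive task is therefore not the algebra but the \emph{interpretation}: I must verify that $\bar{\cR}_{\rm inf}$ is genuinely the ergodic rate attained when there is no quantization, so that the remaining term legitimately carries the meaning of a quantization-induced loss.

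To justify the first label, I would return to the per-user rate underlying Proposition \ref{prp:ergodicL1}. Under the single-path, AoA-aligned setting of Remark \ref{rm:chordal_L1}, the beamspace channel $\bh_{{\rm b},k}$ has a single non-zero entry with $\|\bh_{{\rm b},k}\|^2 = M|g_k|^2$, and $|g_k|^2$ is exponentially distributed with unit mean. Specializing the single-user rate \eqref{eq:rate1} to $L=1$ and using $(1-\alpha)+\alpha=1$ in the denominator, the per-user rate factors cleanly as $\log_2\!\big(1+\rho M|g_k|^2\big)-\log_2\!\big(1+\rho(1-\alpha)M|g_k|^2\big)$. Setting $\alpha=1$ makes $\bR_{\bq\bq}$ vanish in \eqref{eq:cov} and eliminates the second logarithm, leaving precisely the no-quantization rate $\log_2(1+\rho M|g_k|^2)$; this identifies the first logarithm as the ideal term and the second as the penalty.

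With this factorization in hand, the decomposition of the \emph{ergodic} rate is immediate by linearity of expectation: each logarithm is integrated separately against the unit-mean exponential law using the standard identity $\mathbb{E}[\ln(1+cX)]=e^{1/c}\,\Gamma(0,1/c)$, with $c=\rho M$ for the ideal term and $c=\rho(1-\alpha)M$ for the penalty, which returns $\bar{\cR}_{\rm inf}$ and $\bar{\cR}_{\rm loss}(\alpha)$ respectively (after scaling by $S$). To close the interpretation I would also note that $\bar{\cR}_{\rm loss}(\alpha)\le 0$ and that it vanishes as $\alpha\to1$, since $e^{z}\Gamma(0,z)\sim 1/z\to 0$ as $z\to\infty$ forces $\bar\cR_1\to\bar{\cR}_{\rm inf}$ in the high-resolution limit. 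There is no genuine obstacle here; the only steps requiring any care are the exponential-integral identity and the large-argument asymptotics of $\Gamma(0,z)$, both of which are classical.
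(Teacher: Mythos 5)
Your proposal is correct and follows essentially the same route as the paper: the paper likewise identifies $\bar{\cR}_{\rm inf}$ by setting $\alpha=1$ in the per-user rate \eqref{eq:ergodicL1_pf}, evaluating $\mathbb{E}[\log_2(1+\rho\|\bh_{{\rm b},k}\|^2)]$ via $\tfrac{1}{M}\|\bh_{{\rm b},k}\|^2=|g_k|^2\sim{\rm Exp}(1)$, and reading off the remaining term of \eqref{eq:ergodicL1} as the loss. Your explicit factorization $\log_2(1+\rho M|g_k|^2)-\log_2(1+(1-\alpha)\rho M|g_k|^2)$ and the $\alpha\to1$ asymptotics are sound additions but not a different argument.
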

\begin{proof}
	We can remove the quantization error term in \eqref{eq:ergodicL1_pf} by having $\alpha = 1$. 
	Then, we have 
	\begin{align}
		\bbE\left[\log_2\left(1+\rho \|\bh_{{\rm b},k}\|^2\right)\right] = \frac{1}{\ln 2}e^{\frac{1}{\rho M}} \Gamma \left(0,\frac{1}{\rho M}\right)
	\end{align}
	as $\frac{1}{M}\|\bh_{{\rm b},k}\|^2 = |g_k|^2 \sim {\rm Exp}(1)$, and the ergodic sum rate becomes $\bar{\cR}_{\rm inf} = \frac{S}{\ln 2}e^{\frac{1}{\rho M}} \Gamma (0,\frac{1}{\rho M})$.
\end{proof}
Note that as the number of quantization bits decreases to zero, $\bar{\cR}_{\rm loss}(\alpha)$ increases to $\bar{\cR}_{\rm inf}$, which leads $\bar{\cR}_1 \to 0$.
On the other hand, as the number of quantization bits increases to infinity, $\bar{\cR}_{\rm loss}(\alpha)$ decreases to zero, which leads $\bar{\cR}_1 \to \bar{\cR}_{\rm inf}$. 
This complies with intuition.

Now, we focus on the second case where channels have arbitrary AoAs, which leads to the channel leakage effect in the beam domain due to phase offsets.
The derived ergodic rate for the second case is shown in Proposition \ref{prp:leakage_single}.
\begin{proposition}
	\label{prp:leakage_single}
	When channels have a single path and arbitrary AoAs regardless of the quantized angles of the analog combiner, a lower bound of the ergodic sum rate for $|\cS_{\rm cd}| = S$ scheduled users with the proposed chordal distance-based scheduling algorithm is approximated as
	\begin{align}
		\nonumber
		\bar{\cR}_2^{lb} =  \frac{S}{\ln 2}\Biggl(&e^\frac{1+\rho(1-\alpha)(S-1)M^2\mathcal{F}_2(M)}{\rho\alpha M + \rho (1-\alpha)M^2 \mathcal{F}_1(M)}\Gamma\left(0,\frac{1+\rho(1-\alpha)(S-1)M^2\mathcal{F}_2(M)}{\rho\alpha M + \rho (1-\alpha)M^2 \mathcal{F}_1(M)}\right)
		\\ \label{eq:leakage_single}
		 &- e^\frac{1+\rho(1-\alpha)(S-1)M^2\mathcal{F}_2(M)}{\rho(1-\alpha)M^2\mathcal{F}_1(M)}\Gamma\left(0,\frac{1+\rho(1-\alpha)(S-1)M^2\mathcal{F}_2(M)}{\rho(1-\alpha)M^2\mathcal{F}_1(M)}\right)\Biggr)
	\end{align} 
	where $\mathcal{F}_1(M) = \int_0^{1} F^4(\delta, M) \,d\delta$, $\mathcal{F}_2(M) = \left(\int_0^{1} F^2(\delta, M) \,d\delta\right)^2$, and $F(\delta,M)$ is the Fej\'{e}r kernel.
\end{proposition}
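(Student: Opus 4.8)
The plan is to collapse the per-user achievable rate to an expectation over a single exponential random variable, exactly as in Proposition~\ref{prp:ergodicL1}, after replacing the Fej\'er-kernel sums that appear in the quantization and leakage terms by their angular averages. Since the chordal-distance scheduling enforces near-orthogonality (Remark~\ref{rm:chordal_L1}), I would start from the same approximation $\bw_{{\rm zf},k}\approx\bh_{{\rm b},k}/\|\bh_{{\rm b},k}\|^2$ used in step $(a)$ of \eqref{eq:rate_multi_pf0}, which turns \eqref{eq:rate} into
\[
\log_2\!\Bigg(1+\frac{\alpha\rho\|\bh_{{\rm b},k}\|^4}{\|\bh_{{\rm b},k}\|^2+\rho(1-\alpha)\sum_{i}|h_{{\rm b},i,k}|^4+\rho(1-\alpha)\sum_i|h_{{\rm b},i,k}|^2\sum_{u\neq k}|h_{{\rm b},i,u}|^2}\Bigg).
\]
Unlike the aligned case of Proposition~\ref{prp:ergodicL1}, arbitrary AoAs make \emph{every} beamspace bin nonzero, so both sums run over all $N=M$ indices and are governed by the Fej\'er kernel through $|h_{{\rm b},i,k}|^2=M|g_k|^2F^2(\varphi_k-\vartheta_i)$, where $\vartheta_i$ are the quantized combiner angles.

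Next I would evaluate the three ingredients. Because $\bA$ is unitary, $\|\bh_{{\rm b},k}\|^2=M|g_k|^2$ holds \emph{exactly}, so setting $x=|g_k|^2\sim{\rm Exp}(1)$ preserves the marginal used in Proposition~\ref{prp:ergodicL1}. For the self-quantization term I would replace the combiner-grid sum by its average over the uniform offset of $\varphi_k$, giving $\sum_i|h_{{\rm b},i,k}|^4\approx M^3x^2\,\mathcal{F}_1(M)$ with $\mathcal{F}_1(M)=\int_0^1 F^4\,d\delta$. For the leakage term I would average over the interferers' gains (mean one) and over their AoAs, treating distinct users' angles as independent uniforms --- this is precisely the content of the scheduling-induced separation --- so that $\sum_i|h_{{\rm b},i,k}|^2\sum_{u\neq k}|h_{{\rm b},i,u}|^2\approx (S-1)M^3x\,\mathcal{F}_2(M)$ with $\mathcal{F}_2(M)=\big(\int_0^1 F^2\,d\delta\big)^2$. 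Substituting these and cancelling the common factor $\|\bh_{{\rm b},k}\|^2=Mx$ reduces the SINR to $\alpha\rho Mx/(A+Cx)$, where $A=1+\rho(1-\alpha)(S-1)M^2\mathcal{F}_2(M)$ and $C=\rho(1-\alpha)M^2\mathcal{F}_1(M)$ are the quantities in the statement.

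Finally, I would write $1+\mathrm{SINR}_k=(A+Bx)/(A+Cx)$ with $B=\alpha\rho M+C$, so that $\log_2(1+\mathrm{SINR}_k)=\log_2(1+\tfrac{B}{A}x)-\log_2(1+\tfrac{C}{A}x)$, and apply the identity $\mathbb{E}_{x\sim{\rm Exp}(1)}[\log_2(1+cx)]=\tfrac{1}{\ln 2}e^{1/c}\Gamma(0,1/c)$ already invoked in Proposition~\ref{prp:ergodicL1}, once with $c=B/A$ and once with $c=C/A$, producing the incomplete-gamma arguments $A/B$ and $A/C$ of \eqref{eq:leakage_single}; summing the $S$ statistically identical users yields the factor $S$. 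The lower-bound claim follows from convexity: the map $z\mapsto\log_2\big(1+\alpha\rho Mx/(z+Cx)\big)$ is convex, so by Jensen's inequality replacing the random leakage contribution inside $A$ by its mean can only decrease the conditional expectation.

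I expect the main obstacle to lie in the second paragraph, namely rigorously justifying the two Fej\'er-kernel replacements --- most of all the leakage term, where the decorrelation of angular contributions across beamspace bins holds only approximately and is exactly what downgrades the result to an \emph{approximate} lower bound. Controlling the error of the Riemann-type averaging of $F^2$ and $F^4$ against the discrete combiner grid, together with the residual coupling between $x=|g_k|^2$ and the offset of $\varphi_k$ relative to $\{\vartheta_i\}$, is the delicate part; by contrast, the exponential-integral evaluation and the Jensen step are routine once the SINR has been reduced to the $\alpha\rho Mx/(A+Cx)$ form.
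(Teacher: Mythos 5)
Your proposal is correct and follows essentially the same route as the paper: approximate the ZF combiner by the matched filter, reduce the SINR to $\alpha\rho M x/(\Psi_k+1)$ with $x=|g_k|^2\sim{\rm Exp}(1)$, replace the Fej\'er-kernel sums by $M^3x^2\cF_1(M)$ and $(S-1)M^3x\,\cF_2(M)$ with the lower bound justified by Jensen, and finish with the $\bbE[\log_2(1+cx)]=\tfrac{1}{\ln 2}e^{1/c}\Gamma(0,1/c)$ identity, which reproduces \eqref{eq:leakage_single} exactly. The only (immaterial) difference is procedural: the paper routes the conditional expectation through the integral identity of Lemma~1 in \cite{hamdi2010useful} and applies Jensen to $\bbE[e^{-z\Psi_k}\,|\,g_k]$ before using the Laplace transform of the exponential distribution, whereas you apply Jensen directly to the convex map $\Psi\mapsto\log_2\bigl(1+\alpha\rho Mx/(1+\Psi)\bigr)$ and split the logarithm into a difference of two terms --- both yield the identical closed form.
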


\begin{proof}
	See Appendix \ref{appx:proposition2}.
\end{proof}

\begin{remark}
	\label{rm:bit_increase}
	The derived ergodic rate expressions in \eqref{eq:ergodicL1} and \eqref{eq:leakage_single} both converge to $\bar \cR_{\rm inf}$ as the number of quantization bits increases:
	\begin{align}
	    \nonumber
		\bar{\cR}_1, \ \bar{\cR}_2^{lb} \to \frac{S}{\ln 2}e^\frac{1}{\rho M}\,\Gamma\left(0,\frac{1}{\rho M}\right), \quad \text{as }\alpha \to 1.  
	\end{align}
\end{remark}
As the quantization precision increases, the lower bound in \eqref{eq:leakage_proof2} becomes an exact expression, and \eqref{eq:leakage_single} becomes an approximation of the ergodic rate itself rather than its lower bound.
Accordingly, it can be inferred from Remark \ref{rm:bit_increase}  that the two channel scenarios lead to different ergodic rates as a consequence of quantization. 
{\color{black}In this regard, although a single path channel is considered, Propositions \ref{prp:ergodicL1} and \ref{prp:leakage_single} still convey meaningful information as they not only provide closed-form ergodic rates but also specify the channel leakage effect in terms of ergodic rate for low-resolution ADCs.
In addition, the single-path channel model is relevant to the case of unmanned aerial vehicle systems \cite{Zeng18arXiv}, which is of interest in upcoming 5G wireless communication systems.
In Section \ref{sec:simulation}, based on the intuition from Propositions \ref{prp:ergodicL1} and \ref{prp:leakage_single}, we show that the channel leakage, indeed, positively affects the ergodic rate in the low-resolution ADC regime, and thus, makes the difference in the ergodic rates of the two channel scenarios.}

%
%
%
%

\section{Simulation Results}
\label{sec:simulation}

In this section, we numerically evaluate the proposed algorithms, validate the derived ergodic rates, and confirm intuitions in this paper. 
In simulations, the number of channel paths $L_k$ is distributed as $L_k \sim \max\{{\rm Poission}(\lambda_L),1\}$ \cite{akdeniz2014millimeter} where $\lambda_L$ represents the near average number of channel paths.
We consider $M = 128$ BS antennas
and $K = 200$ candidate users, and the BS schedules $S = 12$ users to serve at each transmission \cite{malkowsky2017world,vieira2014flexible}.
Without imposing the constraint of $\|{\bf h}_{{\rm b},k}\| = \sqrt{\gamma_k}$, the following cases are evaluated through simulation: (1) CSS algorithm, (2) greedy algorithm, (3) chordal distance-based algorithm, (4) mmWave beam aggregation-based scheduling (mBAS) algorithm \cite{lee2016performance}, and (5) SUS algorithm \cite{yoo2006optimality}.
To provide a reference for a performance lower bound, a random scheduling case is also included.
For the CSS and the mBAS algorithms, the BS stores $N_b = L_k$ indices of dominant elements in the effective channel ${\bf h}_{{\rm b},k}$.
Parameters such as $\epsilon_{th}$, $N_{\rm OL}$, and $d_{th}$ are optimally chosen unless mentioned otherwise.

\subsection{Performance Validation}
\label{sec:sim_full}

\begin{figure}[t]
\centering
$\begin{array}{c c}
{\resizebox{0.5\columnwidth}{!}
{\includegraphics{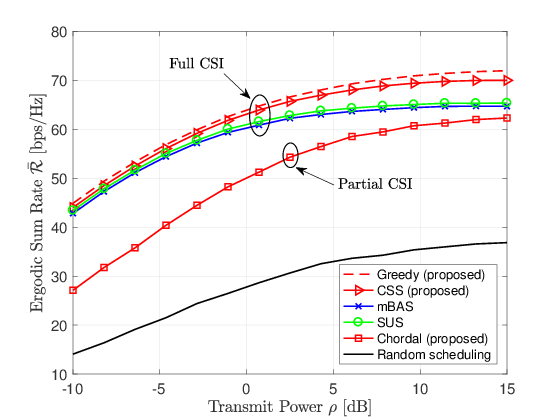}}
}&
{\resizebox{0.49\columnwidth}{!}
{\includegraphics{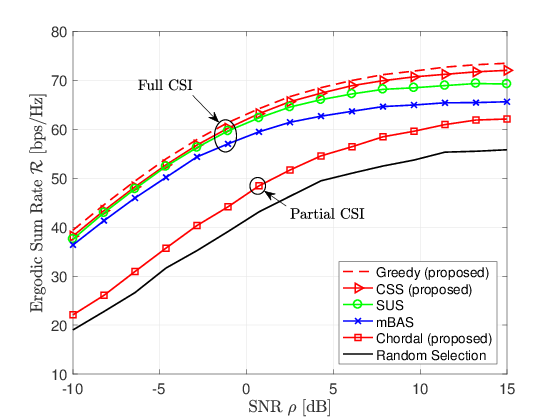}}
}\\
\mbox{\small (a) $\lambda_L = 3$} & \mbox{\small (b) $\lambda_L = 8$}
\end{array}$
\caption{
\color{black} Uplink sum rate simulation results for $M = 128$ BS antennas, $N = 40$ RF chain, $K = 200$ candidate users, $S = 12$ scheduled users, and $b =3$ quantization bits with (a) $\lambda_L = 3$ and (b) $\lambda_L = 8$ average channel paths,.} 
\label{fig:algorithms}
\end{figure}

We first focus on performance validation of the proposed algorithms in sum rate.
In Fig. \ref{fig:algorithms}, we consider  $N = 40$ RF chains which is about $30\%$ of the number of antennas $M=128$ and  $b = 3$ quantization bits.
Fig. \ref{fig:algorithms}(a) shows the uplink sum rate with respect to the SNR $\rho$ for $\lambda_L = 3$.
We note that the proposed CSS algorithm achieves the higher sum rate compared to the SUS and mBAS algorithms. 
In addition, the CSS algorithm attains the sum rate that is comparable to that of the proposed greedy algorithm which achieves the sub-optimal rate by requiring much higher complexity.
The sum rate gap between the CSS and the prior algorithms---the SUS and mBAS algorithms---increases as $\rho$ increases because the quantization noise becomes dominant compared to the AWGN in the high SNR regime.

{\color{black}
Fig.~\ref{fig:algorithms}(b) plots simulation results with $\lambda_L\! =\! 8$ average channel paths for $\sum_{k=1}^S L_S(k) > N$ where the condition in Theorem 1 does not hold. 
The proposed CSS algorithm achieves a higher sum rate than conventional scheduling methods, which shows that although the derived scheduling criteria may not be optimal in a practical system, they can still be effective for mmWave user scheduling as they capture a relationship between the sparse property of mmWave channels and quantization error. 
In Fig.~\ref{fig:algorithms}(a) and (b), the chordal distance-based algorithm which only exploits the AoA knowledge improves the sum rate compared to random scheduling, closing the gap between the SUS and mBAS algorithms.
Therefore, the simulation results validate the sum rate performance of the proposed algorithms.
}


\begin{figure}[t]
\centering
$\begin{array}{c c}
{\resizebox{0.5\columnwidth}{!}
{\includegraphics{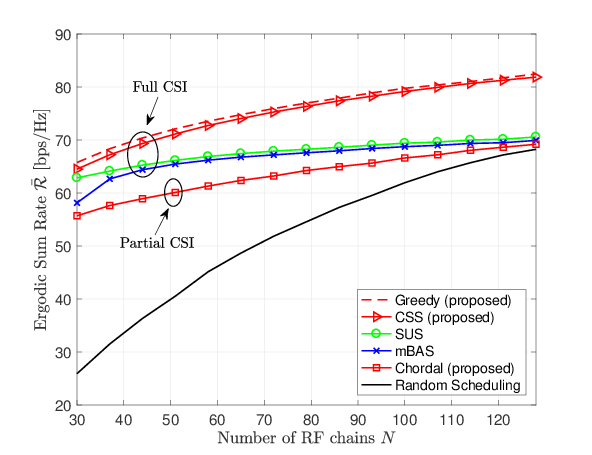}}
}&
{\resizebox{0.51\columnwidth}{!}
{\includegraphics{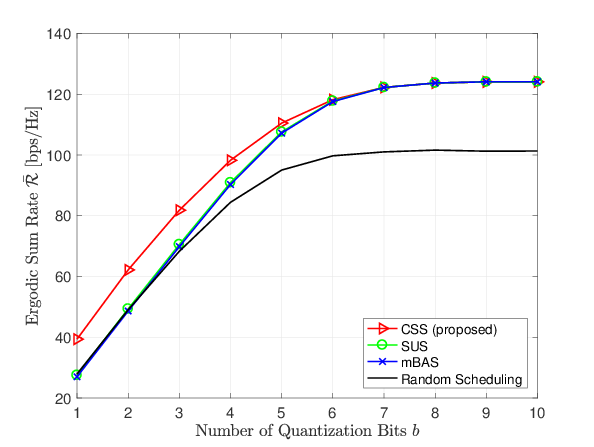}}
}\\
\mbox{\small (a)} & \mbox{\small (b)}
\end{array}$
\caption{
Uplink sum rate for $M = 128$ antennas, $K = 200$ candidate users, $S = 12$ scheduled users, $\lambda_L = 3$ average channel paths, and $\rho = 6$ dB SNR with respect to the number of (a) RF chains $N$ with $b =3$ and (b) quantization bits $b$ with  $N = 128$.} 
\label{fig:algorithms2}
\end{figure}

In Fig. \ref{fig:algorithms2}(a), the sum rate results with respect to the number of RF chains $N$ are presented for $\rho = 6$ dB. 
The CCS algorithm shows its sum rate that tightly aligns with that of the greedy algorithm, achieving the higher rate than the SUS and mBAS.
In addition, the chordal distance-based algorithm shows a large improvement compared to the random scheduling for the low to medium $N$.
As $N$ increases, the effective channels ${\bf h}_{{\rm b},k}$ are more likely to be orthogonal to each other for the fixed number of scheduled users, which enhances the performance of random scheduling. 
In this regard, the sum rates of the SUS and mBAS algorithms show the marginal sum rate increase compared to the random scheduling as $N$ increases, whereas the CSS algorithm still provides the noticeable improvement by mitigating quantization error.

Fig. \ref{fig:algorithms2}(b) shows the uplink sum rate with respect to the number of quantization bits $b$.
The CSS algorithm also attains the sum rate of the greedy algorithm with lower complexity and outperforms the SUS and mBAS algorithms.
Note that the sum rate of the SUS and mBAS algorithms converges to that of the CSS and greedy algorithms as the number of quantization bits $b$ increases; i.e., quantization error becomes negligible.
This convergence corresponds to the fact that the derived criteria is effective under coarse quantization.
Thus, in the low-resolution ADC regime, the CSS algorithm provides the noticeable sum rate improvement compared to the other algorithms that ignore quantization error.



\subsection{Analysis Validation}
\label{sec:sim_partial}

\begin{figure}[t]
\centering
$\begin{array}{c c}
{\resizebox{0.5\columnwidth}{!}
{\includegraphics{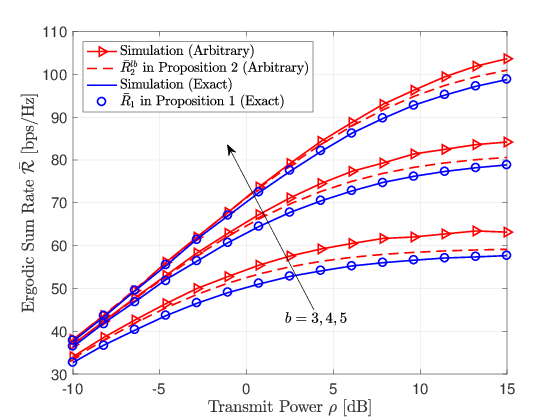}}
}&
{\resizebox{0.5\columnwidth}{!}
{\includegraphics{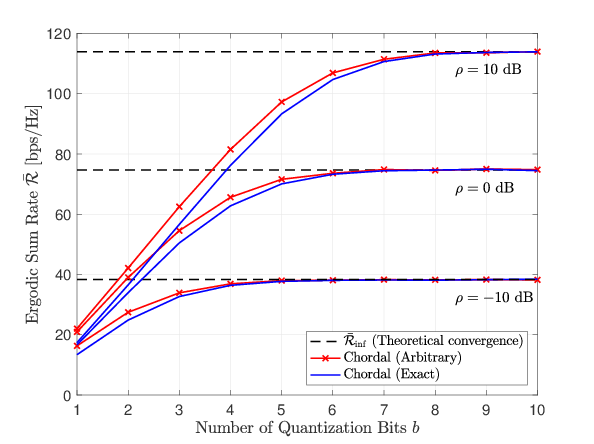}}
}\\
\mbox{(a)} & \mbox{(b)}
\end{array}$
\caption{
(a) The analytical and simulation results for the uplink sum rate of the system with chordal distance-based scheduling, and (b) simulation results for the uplink sum rate of the system with chordal distance-based scheduling for $M = 128$ BS antennas, $N = 128$ RF chains, $K = 200$ candidate users, $S = 12$ scheduled users, and $L_k = 1$ channel path $\forall k$, } 
\label{fig:Propositions}
\end{figure}

In this subsection, we validate the performance analysis and intuitions obtained from the analyses.
In Fig. \ref{fig:Propositions}, we consider $N = 128$  and $L_k = 1$, $\forall k$.
As shown in Fig. \ref{fig:Propositions}(a), the derived ergodic rate \eqref{eq:ergodicL1} in Proposition \ref{prp:ergodicL1} exactly matches the ergodic rate from the simulation.
In addition, the lower bound approximation of ergodic rate \eqref{eq:leakage_single} in Proposition \ref{prp:leakage_single} shows a small gap from the ergodic rate of the simulation, validating its analytical accuracy.
In this regard, the derived ergodic rates can provide a performance guideline for the hybrid MIMO systems with the proposed chordal distance-based algorithm.
From Fig. \ref{fig:Propositions}(a), we note that the two different channel scenarios---exact AoA alignment and arbitrary AoAs---show difference in sum rate for the same system configuration, as discussed in Remark \ref{rm:bit_increase}.
In the following simulation results, we numerically examine this phenomenon based on intuitions obtained in this paper.



We evaluate the sum rate of the chordal distance-based scheduling algorithm with respect to the number of quantization bits $b$ to find the behavior of the sum rate gap between the two channel scenarios: exact AoA alignment and arbitrary AoAs.
In Fig. \ref{fig:Propositions}(b), it is shown that the uplink sum rates converges to $\bar\cR_{\rm inf} = \frac{S}{\ln 2}e^\frac{1}{\rho M}\,\Gamma\left(0,\frac{1}{\rho M}\right)$  as $b$ increases.
As discussed in Remark~\ref{rm:bit_increase}, such convergence of the sum rates implies that the two channel scenarios lead to different effects on quantization error.
We can also note that the convergence rates are different for different $\rho$.
When the SNR is low, the quantization noise is less dominant compared to the AWGN, which results in faster convergence in terms of the number of $b$, and vice versa.
Therefore, we can conclude that coarse quantization causes the different sum rates from the channel scenarios.

\begin{figure}[!t]\centering
\includegraphics[scale = 0.45]{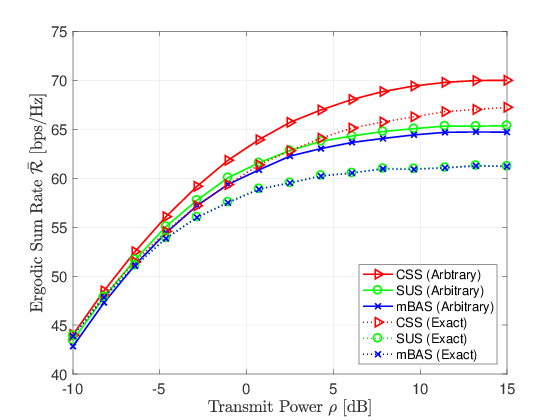}
\caption{Uplink sum rate simulation results for $M = 128$ BS antennas, $N = 40$ RF chains, $K = 200$ candidate users, $S = 12$ scheduled users, $\lambda_L = 3$ average channel paths, and $b = 3$ quantization bits.} 
\label{fig:intuition}
\vspace{-1 em}
\end{figure}

In Fig. \ref{fig:intuition}, we simulate the sum rates for the two channel scenarios
with $N = 40$, $\lambda_L = 3$, and $b = 3$.
We note that the sum rate for the arbitrary AoA channel is higher than that for the exact AoA alignment channel in the medium and high SNR regime in which the quantization noise is dominant over the AWGN.
{\color{black}
The quantization noise variance at the $i$th ADC is computed as $\bbE[|y_i - y_{{\rm q},i}|^2] = \frac{\pi\sqrt{3}}{2}\sigma_i^2 2^{-2b}$ \cite{orhan2015low}, where $\sigma_i^2 = \bbE[|y_i|^2] = p_u\|[\bH_{\rm b}]_{i,:}\|^2 + 1$. 
Therefore, without the phase offset, most $\sigma_i^2$ would be large whereas most $\sigma_i^2$ would be moderate with the phase offsets as the phase offsets spread the channel path gain at certain angles over the entire angles of RF chains.
Consequently, the phase offset reduces the overall quantization noise variance and this leads to the performance gain.
This corresponds to the results in Theorem 1-(ii), i.e., it is more beneficial to have more spread beamspace gains than to have concentrated beamspace gains.}

\section{Conclusion}

This paper investigated user scheduling for mmWave hybrid beamforming systems with low-resolution ADCs. 
We proposed new user scheduling criteria that are effective under coarse quantization.
Leveraging the criteria, we developed the user scheduling algorithm which achieves the sub-optimal sum rate with low complexity, outperforming the conventional scheduling algorithms.
We further proposed the chordal distance-based scheduling algorithm which only exploits the AoA knowledge of channels. 
The chordal distance-based scheduling algorithm improved the sum rate compared to the random scheduling case, closing the gap between the full CSI-based conventional scheduling methods as the SNR increases.
We also provided the performance analysis for the algorithm in ergodice rate, and the derived rates are the functions of system parameters including quantization bits. 
We obtained an intuition from the derived rates that channel leakage due to the phase offsets between the arbitrary AoAs and quantized angles of analog combiners offers the sum rate gain by reducing the quantization error compared to the channel without leakage.
This intuition challenges the conventional negative understanding towards channel leakage and is validated through simulation.
Therefore, for mmWave communications, this paper provides not only new user scheduling algorithms for low-resolution ADC systems, but also new scheduling criteria and intuition for mmWave channels under coarse quantization.
{\color{black} For potential future work, investigating user scheduling for sub-array based hybrid systems with low-resolution ADCs is desirable to consider more practical receiver architectures.}

\begin{appendices}
\section{Proof of Proposition 1}
\label{appx:proposition1}

Let the ZF combiner ${\bf W}_{\rm zf} = {\bf H}_{\rm b}(\cS_{\rm cd})({\bf H}_{\rm b}(\cS_{\rm cd})^H {\bf H}_{\rm b}(\cS_{\rm cd}))^{-1}$.
Using the achievable rate \eqref{eq:rate}, the ergodic rate of user $k \in \cS_{\rm cd}$ is defined as
\begin{align}
	\label{eq:ergodic}
	\bar{r}_k = \bbE\Big[r_k\big(\bH_{\rm b}(\cS_{\rm cd})\big)\Big] 
	& = \mathbb{E}\left[ \log_2 \left(1+\frac{\alpha^2 \rho}{ {\bf w}_{{\rm zf},k}^H {\bf R}_{{\bf qq}}(\bH_{\rm b}(\cS_{\rm cd})){\bf w}_{{\rm zf},k}+\alpha^2 \|{\bf w}_{{\rm zf},k}\|^2}\right) \right].
\end{align}
Based on Remark \ref{rm:chordal_L1}, the algorithm schedules a user $j \in \cK_{\rm cd}$ who provides the smallest value of $|{\bf a}^H(\phi_k) {\bf a}(\phi_{j})|$.
Under the assumption of the exact AoA alignment, $|{\bf a}^H(\phi_k) {\bf a}(\phi_{j})|$ is equivalent to zero when $\mathcal{L}_{k} \cap \mathcal{L}_{j} = \emptyset$ for $ k \neq j$, i.e., user channels are spatially orthogonal to each other.
For the exact AoA alignment scenario with $L =1$, there is only one non-zero element in $\bh_{{\rm b},k}$. 
Accordingly, any scheduled users have to satisfy $\mathcal{L}_{k} \cap \mathcal{L}_{j} = \emptyset$ to avoid rank deficiency of a channel matrix, which can be guaranteed by setting $|{\bf a}^H(\phi_k) {\bf a}(\phi_{k'})| < \epsilon_{th} \ll 1$ in the filtering.
Hence, the ZF combiner for user $k \in \cS_{\rm cd}$ becomes ${\bf w}_{{\rm zf},k} = {\bf h}_{{\rm b}, k}/\|{\bf h}_{{\rm b}, k}\|^2$, and \eqref{eq:ergodic} is solved as
\begin{align}
	\label{eq:ergodicL1_pf}
	\bar{r}_k  
	& = \bbE\left[\log_2 \Biggl(1+\frac{\alpha \rho {\|{\bf h}_{{\rm b},k}\|^4} }{\rho (1-\alpha){\bf h}^H_{{\rm b},k}{\rm diag}\Big({\bf H}_{\rm b}(\cS_{\rm cd}) {\bf H}_{\rm b}(\cS_{\rm cd})^H\Big) {\bf h}_{{\rm b},k}+ \|{\bf h}_{{\rm b},k}\|^2}\Biggr) \right] \\ \nonumber
	& \stackrel{(a)} = \mathbb{E}\left[ \log_2 \Bigg(1+\frac{\alpha \rho}{(1-\alpha)\rho + 1/(M|g_{k}|^2)}\Bigg) \right]
		\\  \label{eq:ergodicL1_pf1}
	&\stackrel{(b)}= \frac{1}{\ln 2}\left(e^\frac{1}{\rho M}\,\Gamma\left(0,\frac{1}{\rho M}\right)-e^\frac{1}{\rho (1-\alpha) M }\,\Gamma\left(0,\frac{1}{\rho (1-\alpha)M}\right)\right)
\end{align}
where $g_k$ is the complex gain of the propagation path of user $k$.
Here, (a) is from $L = 1$ with $\mathcal{L}_{k} \cap \mathcal{L}_{k'} = \emptyset$ for $k, k' \in \cS_{\rm cd}$, and (b) comes from the fact that $|g_{k}|^2$ is an exponential random variable with the rate parameter $\lambda = 1$, $|g_{k}|^2 \sim {\rm Exp}(1)$.
Due to the randomness of $g_k$, the ergodic rate of each user is equal, which leads to \eqref{eq:ergodicL1}.
This completes the proof.
\qed

\section{Proof of Proposition 2}
\label{appx:proposition2}

To find a lower bound of the ergodic sum rate achieved by the proposed algorithm, we consider the random scheduling method and find its ergodic sum rate for the lower bound. 
Since we focus on a large antenna array system at the BS, the array response vectors of the scheduled users are almost orthogonal with large $M$ \cite{ngo2014aspects}, and thus we adopt ${\bf w}_{{\rm zf},k} \approx \frac{{\bf A}^H{\bf h}_{ k}}{\|{\bf h}_k\|^2}$. Then, the ergodic rate of the scheduled user $k$ can be approximated as 
\begin{align}
	\nonumber
	&\bar{r}_k  = \mathbb{E}\left[ \log_2 \left(1+\frac{\alpha^2 \rho}{ {\bf w}_{{\rm zf},k}^H {\bf R}_{{\bf qq}}\big(\bH_{\rm b}(\cS_{\rm cd})\big){\bf w}_{{\rm zf},k}+\alpha^2 \|{\bf w}_{{\rm zf},k}\|^2}\right)\right]\\ \label{eq:leakage_proof}
	&\stackrel{(a)}  \approx  \mathbb{E}\left[\log_2\left(1+\frac{\alpha\rho\|{\bf h}_k\|^4}{(1-\alpha)({\bf A}^H{\bf h}_k)^H{\rm diag}\big(\rho {\bf A}^H{\bf H}(\cS_{\rm cd}){\bf H}^H(\cS_{\rm cd}){\bf A}\big){\bf A}^H{\bf h}_k + \|{\bf h}_k\|^2}\right)\right],
\end{align} 
where (a) comes from ${\bf w}_{{\rm zf},k} \approx \frac{{\bf A}^H{\bf h}_k}{\|{\bf h}_k\|^2}$. 
Without loss of generality, let $\cS_{\rm cd} = \{1,2,\dots,S\}$.
The channel matrix of scheduled users can be represented as ${\bf H}(\cS_{\rm cd}) = \sqrt{M} \bA_u \bG$ where ${\bf A}_u = [{\bf a}(\varphi_1), \dots, {\bf a}(\varphi_{S})]$ and ${\bG} = {\rm diag}(g_1,\dots, g_{S})$, and \eqref{eq:leakage_proof} becomes
\begin{align}
	\nonumber
	&\mathbb{E}\left[\log_2\left(1+ \frac{M^2\alpha \rho |g_k|^4}{M^2\rho(1-\alpha)|g_k|^2{\bf a}^H(\varphi_k){\bf A}{\rm diag}\big({\bf A}^H{\bf A}_u{\bG} \,{\bG}^H{\bf A}_u^H{\bf A}\big){\bf A}^H{\bf a}(\varphi_k)+M|g_k|^2} \right)\right] 
	\\ \nonumber
	& = \mathbb{E}\left[\log_2\left(1+\frac{M\alpha\rho|g_k|^2}{M\rho(1-\alpha)\sum_{m,s=1}^{M,S}|g_s|^2|{\bf a}^H(\vartheta_m){\bf a}(\varphi_k)|^2|{\bf a}^H(\vartheta_m){\bf a}(\varphi_s)|^2+1}\right)\right] 
	\\ \label{eq:leakage_proof1}
	& = \mathbb{E}_{g_k}\left[\mathbb{E}\left[\log_2\left(1+\frac{M\alpha\rho|g_k|^2}{M\rho(1-\alpha)\sum_{m,s} |g_s|^2|{\bf a}^H(\vartheta_m){\bf a}(\varphi_k)|^2|{\bf a}^H(\vartheta_m){\bf a}(\varphi_s)|^2+1}\right)\!\bigg| g_k\right] \right].
\end{align}
To compute the inner expectation in \eqref{eq:leakage_proof1}, we can use Lemma 1 in \cite{hamdi2010useful} as $g_k$ is considered to be a constant given the condition, which makes the signal power and the interference-plus-noise power independent to each other.
Let $\Psi_k =M\rho(1-\alpha)\sum_{m,s=1}^{M,S}|g_s|^2|{\bf a}^H(\vartheta_m){\bf a}(\varphi_k)|^2|{\bf a}^H(\vartheta_m){\bf a}(\varphi_s)|^2$, then the inner expectation in \eqref{eq:leakage_proof1} becomes
\begin{align}
	\nonumber
	\mathbb{E}\left[\log_2\left(1+\frac{M\alpha\rho|g_k|^2}{\Psi_k+1}\right)\!\bigg|g_k\right] & \stackrel{(a)}=\frac{1}{\ln 2} \int_0^{\infty}	\frac{e^{-z}}{z}\left(1-e^{-zM\alpha\rho|g_k|^2}\right)\bbE\left[e^{-z\Psi_k}\Big|g_k\right]dz
	\\ \label{eq:leakage_proof2}
	&\stackrel{(b)}\geq \frac{1}{\ln 2} \int_0^{\infty}	\frac{e^{-z}}{z}\left(1-e^{-zM\alpha\rho|g_k|^2}\right)e^{-z\mathbb{E}[\Psi_k|g_k]}dz
\end{align}
where (a) follows from Lemma 1 in \cite{hamdi2010useful} and (b) comes from Jensen's inequality. To compute the expectation in \eqref{eq:leakage_proof2}, we rewrite it as
\begin{align}
	\nonumber
	 \bbE\big[\Psi_k | g_k\big] = M\rho (1-\alpha) \Biggl( &\mathbb{E}\left[\sum_{m=1}^{M}|g_k|^2 \big|{\bf a}^H(\vartheta_m){\bf a}(\varphi_k)\big|^4\bigg|g_k\right] +
	\\ \label{eq:leakage_proof3}
	& \mathbb{E}\left[\sum_{m=1}^{M} \sum_{s\neq k}^{S}|g_s|^2\big|{\bf a}^H(\vartheta_m){\bf a}(\varphi_k)\big|^2\big|{\bf a}^H(\vartheta_m){\bf a}(\varphi_s)\big|^2 \right]\Biggr).
\end{align}
The first expectation term in \eqref{eq:leakage_proof3} can be computed as
\begin{align}
	\mathbb{E}\left[\sum_{m=1}^{M}|g_k|^2 \big|{\bf a}^H(\vartheta_m){\bf a}(\varphi_k)\big|^4\bigg| g_k \right] & = |g_k|^2\sum_{m=1}^{M} \mathbb{E}\left[|{\bf a}^H(\vartheta_m){\bf a}(\varphi_k)\big|^4\right] \stackrel{(a)}= |g_k|^2 M \int_0^1 F^4\left(\delta;M\right)\, d\delta \label{eq:leakage_proof4}
\end{align}
where (a) comes from the fact that $\delta_{m,k}:=\vartheta_m-\varphi_k$ can be regarded as $ \delta_{m,k} \overset{i.i.d.}{\sim} {\rm Unif}\bigl[-1,1\bigr]$ due to the symmetry of the Fej\'{e}r kernel of order $M$, $F(\vartheta;M)$ \cite{strichartz2000way}.
Then, with $\bbE[|g_s|^2] = 1$, the second expectation term can be expressed as
\begin{align}
	\nonumber
	 &\mathbb{E}\left[\sum_{m=1}^{M} \sum_{s\neq k}^{S}|{\bf a}^H(\vartheta_m){\bf a}(\varphi_k)|^2|{\bf a}^H(\vartheta_m){\bf a}(\varphi_s)|^2\right] 
	   = \sum_{m=1}^{M}\sum_{s\neq k}^{S}\mathbb{E}\Big[|{\bf a}^H(\vartheta_m){\bf a}(\varphi_k)|^2\Big]\mathbb{E}\Big[|{\bf a}^H(\vartheta_m){\bf a}(\varphi_s)|^2\Big] 
	 \\ 
	 & = \sum_{m=1}^{M}\sum_{s\neq k}^{S} \mathbb{E}\left[F^2\left( \delta_{m,k}; M \right)\right]\mathbb{E}\left[F^2\left(\delta_{m,s}; M \right)\right]
	 = (S -1) M \left(\int_0^1 F^2\left(\delta; M \right)\, d\delta  \right)^2.	 \label{eq:leakage_proof5}
\end{align}
Let $c_1 = M\alpha\rho$, $c_2 = M^2\rho(1-\alpha)\int_0^1 F^4\left(\delta;M\right)\, d\delta$, and $c_3 = M^2\rho(1-\alpha)(S-1)\left(\int_0^1 F^2\left(\delta; M \right)\, d\delta \right)^2$. From \eqref{eq:leakage_proof1}, \eqref{eq:leakage_proof2},\eqref{eq:leakage_proof4}, and  \eqref{eq:leakage_proof5}, the ergodic rate $\bar r_k$ is approximately lower bounded by
\begin{align}
	\nonumber
	\bar{r}_k  &\approx  \bbE_{g_k}\left[ \mathbb{E}\left[\log_2\left(1+\frac{c_1|g_k|^2}{\Psi_k+1}\right)\!\bigg| g_k\right]\right]
	 \geq \frac{1}{\ln 2}  \bbE_{g_k} \left[\int_0^{\infty}	\frac{e^{-z}}{z}\left(1-e^{-zc_1|g_k|^2}\right)e^{-z\mathbb{E}[\Psi_k|g_k]}dz\right]
	\\ \nonumber
	& = \frac{1}{\ln 2}  \int_0^{\infty}	\frac{e^{-(1+c_3)z}}{z}\left(\bbE_{g_k} \left[e^{-c_2z|g_k|^2}\right]-\bbE_{g_k}\left[e^{-(c_1+c_2)z|g_k|^2}\right]\right)dz
		\\ \nonumber
	& \stackrel{(a)} = \frac{1}{\ln 2}  \int_0^{\infty}	\frac{e^{-(1+c_3)z}}{z}\left(\frac{1}{1+c_2z}-\frac{1}{1+(c_1+c_2)z}\right)dz\\
	& = \frac{1}{\ln 2}\left(e^{\frac{1+c_3}{c_1+c_2}}\,\Gamma\left(0, \frac{1+c_3}{c_1+c_2}\right) - e^{\frac{1+c_3}{c_2}}\,\Gamma\left(0, \frac{1+c_3}{c_2}\right) \right)
\end{align}
where (a) comes from the Laplace transform of the exponential distribution $|g_k|^2 \sim \exp(1)$.
Without the fading information of channels, the ergodic rate for each user after the user scheduling is equivalent to each other, which results in \eqref{eq:leakage_single}.
This completes the proof.
\qed
\end{appendices}

\bibliographystyle{IEEEtran}
\bibliography{Scheduling_ADC.bib}
\end{document}